%% file: main.tex
\documentclass[11pt]{article}
\input{packages}

\usepackage{tcolorbox}

\input{defines}

\date{}

\begin{document}
\input{titleauth}

\maketitle
\sloppy


\begin{abstract}
    In the cup game, an adversary distributes $1$ unit of water among $n$ initially empty cups during each time step. The player then selects a single cup from which to remove up to $1$ unit of water, with the goal of minimizing the backlog, i.e., the supremum of the height of the fullest cup over all time steps.
    In the bamboo trimming problem, the adversary must choose fixed rates for the cups, and the player is additionally allowed to empty the chosen cup entirely. Past work has shown that the optimal backlog in these two settings is $\Theta(\log n)$ and $2$ respectively.

    The \textbf{greedy} algorithm, which always removes water from the fullest cup, has been shown in previous work to be exactly optimal in the general cup game and asymptotically optimal in the bamboo setting. The \textbf{greedy} algorithm has been conjectured by D'Emidio, Di Stefano, and Navarra to achieve the exactly optimal backlog of $2$ in the bamboo setting as well. A series of works have steadily improved the best-known upper bound for \textbf{greedy} to its current value of $4$, and the conjecture was additionally supported by prior computer experimentation. 
    
    In this paper, we prove a lower bound of $2.076$ for the backlog of the \textbf{greedy} algorithm, disproving the conjecture of D'Emidio, Di Stefano, and Navarra. We also introduce a new algorithm, a \textbf{hybrid greedy/Deadline-Driven}, which achieves backlog $O(\log n)$ in the general cup game, and remains exactly optimal for the bamboo trimming problem as well as the fixed-rate cup game. Thus we introduce the first non-trivial lower bound for \textbf{greedy} in the bamboo trimming problem, as well as the first algorithm that achieves asymptotically optimal performance across all three settings.

    Additionally, we introduce a new model, the semi-oblivious cup game, in which the player is uncertain of the exact heights of each cup. We analyze the performance of the \textbf{greedy} algorithm in this setting, which can be viewed as selecting an arbitrary cup within a constant multiplicative factor of the fullest cup. We prove matching upper and lower bounds showing that the \textbf{greedy} algorithm achieves a backlog of $\Theta(n^{\frac{c-1}{c}})$ in the semi-oblivious cup game. We also establish matching upper and lower bounds of $2^{\Theta(\sqrt{\log n})}$ in the semi-oblivious cup flushing game. Finally, we show that in an additive error setting, greedy is actually able to achieve backlog $\Theta(\log n)$, via matching upper and lower bounds.
\end{abstract}

\newtheorem{claim}{Claim}

\thispagestyle{empty}

\newpage
\setcounter{page}{1}

\section{Introduction}



The \defn{cup game} is a classical buffer-management scheduling problem, dating back to the late 1960s~\cite{Liu69, LiuLa73}, which is embodied as a two-player game played on $n$ initially empty cups. During each time step, the 
\defn{adversary} first distributes $1$ unit of water among the cups arbitrarily.
The \defn{player} then selects a single cup, and is allowed to remove up to $1$ unit of water from that cup. As we will see, in some variations, the player is allowed to remove more than $1$ unit of water, but always just from a single cup.
The player's goal is to minimize the \defn{backlog}, the supremum of the amount of water in the fullest cup over all time steps.

Cup games have found ample applications in areas including processor scheduling ~\cite{BaruahCoPl96,GasieniecKlLe17,BaruahGe95,LitmanMo11,LitmanMo05,MoirRa99,BarNi02,GuanYi12,Liu69, LiuLa73, AdlerBeFr03, LitmanMo09, DietzRa91,BenderFaKu19, Kuszmaul20, tailsize, BenderKu20}, network-switch buffer management~\cite{Goldwasser10,AzarLi06,RosenblumGoTa04,Gail93}, quality-of-service guarantees~\cite{BaruahCoPl96,AdlerBeFr03,LitmanMo09}, and data-structure deamortization~\cite{AmirFaId95,DietzRa91,DietzSl87,AmirFr14,Mortensen03,GoodrichPa13,FischerGa15,Kopelowitz12,BenderDaFa20}. See \cite{Kuszmaul20} for a detailed discussion of the related work.
Since cup games have arisen independently several times due to their natural modeling of many real-world phenomena in computer systems, they have also been studied under the monikers the \defn{leaky-bucket problem}~\cite{AdlerBeFr03}, \defn{Cinderella versus the wicked stepmother}~\cite{BodlaenderHuKu12} and \defn{bamboo garden trimming}~\cite{GasieniecKlLe17} (as well as more restrained names such as \defn{multiprogram scheduling} \cite{LiuLa73}).

Most variations of the cup game have strengthened the player's hand \cite{AdlerBeFr03, BenderFaKu19, LitmanMo09, BaruahCoPl96,GasieniecKlLe17,BaruahGe95,LitmanMo11,LitmanMo05,MoirRa99,BarNi02,GuanYi12,Liu69,LiuLa73, BenderKu20, DietzSl87, Kuszmaul22, GasieniecJuKl24}.
A first modification is to give the player \defn{resource augmentation}, where the player is allowed to empty more than $1$ unit of water. In the limit, the player is allowed to empty cups (reducing their height to $0$) rather than only removing a single unit of water. This is known as the \defn{cup flushing game} and has been studied for over $30$ years \cite{DietzSl87, DietzRa91, BenderFaKu19, LitmanMo09}. Additionally, a wide body of work has considered the scheduling problem resulting from the important special case where the adversary is restricted to its initial distribution of water for all subsequent time steps \cite{BaruahCoPl96,GasieniecKlLe17,BaruahGe95,LitmanMo11,LitmanMo05,MoirRa99,BarNi02,GuanYi12,Liu69,LiuLa73}; the fill rates of each cup are then fixed. 
This is known as the 
\defn{fixed-rate cup game}.
When both modifications are applied simultaneously, the result is the fixed-rate cup flushing game---this is also known in the literature as the \defn{bamboo garden trimming problem}, introduced by \cite{GasieniecKlLe17}.\footnote{The name stems from a rather creative metaphor in which $n$ bamboos grow, each at a fixed rate, and a gardener (who happens to be a panda) selects $1$ bamboo each day to chop down.}

The \defn{greedy} algorithm, which always selects the fullest cup, has been a central research topic in every variation of the cup game.
\textbf{Greedy} is known to achieve optimal backlog $\Theta(\log n)$ in the cup flushing game due to a seminal $1987$ paper \cite{DietzSl87} by Dietz and Sleator.
Interestingly, this tight $\Theta(\log n)$ bound on backlog is robust, regardless of how much resource augmentation the player is given.
Their proof \cite{DietzSl87}, in under half a page, uses an elegant family of invariants, and can be extended to the canonical cup game, as was  independently discovered by~\cite{AdlerBeFr03}. These works show that in both settings the \textbf{greedy} algorithm achieves the exact optimal backlog\footnote{Note that the additive $1$ disappears if the amount of water in a cup is allowed to be negative.} of $H(n) + 1$, where $H(n)$ denotes the $n$th harmonic number, i.e., $H(n) = \sum_{i=1}^n 1/i$. Remarkably, this past work demonstrates that there is no separation between the optimal worst-case backlog of the cup game and the cup flushing game.



The \textbf{greedy} algorithm has also received considerable attention in the bamboo setting, where the optimal backlog for any algorithm is known to be exactly  $2$~\cite{BaruahCoPl96,GasieniecKlLe17,BaruahGe95,LitmanMo11,LitmanMo05,MoirRa99,BarNi02,GuanYi12,Liu69, LiuLa73}. 
D'Emidio, Di Stefano, and Navarra~\cite{DEmidioStNa19} conjecture, with the support of an extensive computer search, that \textbf{greedy} is optimal in this setting, achieving backlog $2$. If true, this would make \textbf{greedy} optimal for both the bamboo trimming problem and the cup game.
Considerable effort has been made towards proving D'Emidio et al.'s conjecture, and the best known upper bound in this setting was improved from the baseline of $O(\log n)$~\cite{DietzSl87, AdlerBeFr03} to $9$~\cite{BiloGuLe20}, and then to the current best bound of $4$ ~\cite{Kuszmaul22}. Nonetheless, in this paper, we show that the conjecture is actually false -- \textbf{greedy} does not achieve backlog $2$. 


We additionally study a new model, the \defn{semi-oblivious cup game}, which captures uncertainty in the player's estimate of the height of each cup, thus strengthening the adversary's hand. In contrast to most well-studied variations of the cup game, the semi-oblivious cup game weakens the player. Each time step $t$, rather than being informed of the exact height $f_j(t)$ of each cup $j$, the player instead receives an estimate $f'_j(t)$ of the height of each cup $t$. This estimate must satisfy 
$f'_j(t) \in [f_j(t), cf_j(t)]$.
The \defn{greedy} algorithm simply selects the cup with the maximal $f'_j(t)$.  By analyzing the semi-oblivious model, we can gain insight into settings in which the player is given imperfect information regarding the cup heights (i.e., the processor has imperfect information regarding the amount of work in each buffer), and thus the player may end up selecting a cup that is off by at most a constant factor.

In the semi-oblivious cup game model, the \textbf{greedy} algorithm always removes water from cups whose height is within an $O(1)$ factor of the fullest cup. What is unclear is how this affects the game. How important is having perfect information for achieving good backlog bounds? Does \textbf{greedy} continue to achieve a backlog of $O(\log n)$, or do the guarantees of the algorithm fundamentally require access to high-fidelity measurements of how much water is in the cups?


In summary, our work explores the limitations of the \textbf{greedy} algorithm in both the most favorable cup game setting for the player, i.e., the bamboo trimming problem, and in one of the least favorable settings, i.e., the semi-oblivious setting. In doing so, we obtain new upper and lower bounds that extend our understanding of the strengths and limitations of the greedy algorithm, as well as the types of bounds that are possible in each setting.

\subsection{Our Results}


Our first result is a tight analysis of the \textbf{greedy} algorithm in the semi-oblivious cup game. In this setting, the \textbf{greedy} algorithm always selects a cup whose height is within a constant multiplicative error $c$ of the fullest cup, and removes up to 1 unit of water from it. We show that \textbf{greedy} performs much worse in this setting than in the classical version of the game. In particular, we prove matching upper and lower bounds of the form $\Theta(n^{\frac{c-1}{c}})$ on the algorithm's backlog.

\john{Could strive for bound that continuously reaches $\log n $ at $c = 1$. (Do these bounds hold for superconstant or subconstant $c$?)}

A natural question is whether the bound achieved by semi-oblivious \textbf{greedy} might improve with the help of resource augmentation, i.e., in the \defn{semi-oblivious cup flushing game}, where the player removes all of the water from some cup on each step. 
Our second result shows that resource augmentation does, in fact, make a difference, allowing \textbf{greedy} to achieve a sub-polynomial bound on backlog. Specifically, we prove tight upper and lower bounds of the form $2^{\Theta(\sqrt{\log n})}$ on the backlog that \textbf{greedy} achieves in the semi-oblivious cup flushing game. 


These results demonstrate a separation between the performance of the \textbf{greedy} algorithm in the semi-oblivious cup game with and without resource augmentation. This contrasts with results for the canonical cup game, where resource augmentation has no effect on the optimal backlog~\cite{DietzSl87, AdlerBeFr03}.

These results may give the impression that the $O(\log n)$ bound achieved by \textbf{greedy} in the canonical cup game is actually quite fragile. Does \textbf{greedy} actually require perfect information to achieve the classic $O(\log n)$ backlog guarantee? To this end, we consider an additive-error version of the semi-oblivious cup game, in which the adversary's estimates are constrained by $f'_j(t) \in [f_j(t), c + f_j(t)]$. Past work \cite{BenderFaKu19} established that \textbf{greedy} achieves $O(\log n)$ backlog in the cup \emph{flushing} game, even with $O(1)$ additive error.  For the standard (non-flushing) version of the game, we find matching upper and lower bounds on \textbf{greedy}'s performance in the additive error setting of 
$$(c+1) \ln n \pm \Theta(1) = \Theta(\log n).$$ 
This result is especially exciting since the known techniques for the cup flushing game, i.e., the potential-function style arguments of \cite{BenderFaKu19}, do not yield anything non-trivial in this setting. Our backlog bound can be viewed as a robustness result, demonstrating that \textbf{greedy} remains resilient against additive errors in the canonical cup game, but that its resiliency decays directly proportionally to the quantity $1 + c$.

All of our lower bounds in the semi-oblivious cup game extend to apply not just against the \textbf{greedy} algorithm, but against any deterministic algorithm. Thus, as an immediate (and non-trivial) corollary, we get that \textbf{greedy} is asymptotically optimal in all of the above settings. \john{Probably could extend to an oblivious adversary against any randomized algorithm as well.}

Our final result revisits the behavior of \textbf{greedy} in the bamboo flushing version of the cup game. Here, it is conjectured \cite{BiloGuLe20, Kuszmaul22, DEmidioStNa19} that \textbf{greedy} achieves the optimal backlog, which in this version of the game is $2$. We show that this conjecture is not true, giving a lower bound of $2.076$ in \secref{lower_bound}. We also present a simple \textbf{Deadline-Driven/Greedy} hybrid algorithm that performs optimally across both the bamboo trimming problem and several variants of the cup game, always achieving a backlog within a $1 + o(1)$ factor of optimal. (See Figure \ref{fig:bounds}.)

\subsection{Other Related Work}

The \textbf{greedy} algorithm has also received considerable attention in the fixed-rate cup game, a variant of the cup game where the adversary must select some fixed rates for the cups ahead of time. Although the optimal backlog is $O(1)$~\cite{BaruahCoPl96,GasieniecKlLe17,BaruahGe95,LitmanMo11,LitmanMo05,MoirRa99,BarNi02,GuanYi12,Liu69, LiuLa73}, the \textbf{greedy} algorithm is unable to improve in the fixed-rate setting over the cup game, still achieving backlog $\Theta(\log n)$~\cite{AdlerBeFr03}.

One of the building blocks of our hybrid algorithm is the \textbf{Deadline-Driven Strategy}, a scheduling algorithm that achieves optimal backlog of $2$ in both the fixed-rate cup game and the bamboo setting. The \textbf{Deadline-Driven Strategy}, first studied in the late 60s by Liu and Layland \cite{Liu69, LiuLa73}, selects the cup that will soonest achieve height $2$ of those that have height at least $1$. One of the consequences of the proofs of \cite{LitmanMo09} is that the algorithm is worst-case optimal. The \textbf{Deadline-Driven Strategy} was introduced and analyzed directly in the bamboo and fixed-rate cup game settings by \cite{Kuszmaul22}.

One additional line of research has achieved progressively better competitive ratios for the bamboo trimming problem, which is beneficial for instances of bamboo which may admit a solution with backlog less than $2$ \cite{GasieniecKlLe17, pinwheel2, pinwheel3, GasieniecJuKl24, HohneSt23, Kawamura24}. The current best approximation ratio is $4/3$ \cite{Kawamura24}.

\begin{figure}
    \centering
    \begin{tabular}{|l|p{2.8cm}|p{2.2cm}|p{2.6cm}|}
        \hline
        Setting & Fixed-rate Cup Flushing Game (Bamboo) & Fixed-rate Cup Game & Cup Game and \newline Cup Flushing Game \\
        \hline
        Optimal & 2 \hspace*{\fill}\cite{BaruahCoPl96}~~ & 2 \hspace*{\fill}\cite{BaruahCoPl96} & $H(n) + 1$ \hspace*{\fill}\cite{DietzSl87, AdlerBeFr03} \\
        \hline
        Greedy & $\mathbf{2.076} \leq x \leq 4$ \hspace*{\fill}\cite{Kuszmaul22} & $H(n) + 1$ \hspace*{\fill}\cite{AdlerBeFr03} & $H(n) + 1$ \hspace*{\fill}\cite{DietzSl87, AdlerBeFr03} \\
        \hline
         Deadline-driven  & $2$ \hspace*{\fill}\cite{Kuszmaul22, LiuLa73, BaruahCoPl96, LitmanMo05} & $2$  \hspace*{\fill}\cite{Kuszmaul22, LiuLa73, BaruahCoPl96, LitmanMo05} & Undefined ($\infty$) \\
        \hline
        \textbf{Deadline/Greedy Hybrid} & $\mathbf{2}$ & $\mathbf{2}$ & $\mathbf{H(n) + 2}$ \\
        \hline
    \end{tabular}
    \caption{Current bounds on worst-case backlog. Bolded text indicates our contribution.}
    \label{tab:comparison}
\label{fig:bounds}
\end{figure}

\section{Preliminaries}

We prove an inequality about the harmonic series that will be useful in our cup game analysis. See \cite{Pugh04} for related facts.
\begin{fact}
For fixed constant $c > -1$, $\prod _{i=1}^n(1+\frac{c}{i})=\Theta(n^c)$
    \label{fact:harmonic}
\end{fact}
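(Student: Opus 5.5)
We take logarithms and compare the resulting sum with $c\,H(n)$, using $H(n) = \ln n + O(1)$. Write
\[
\ln \prod_{i=1}^n \Bigl(1+\frac{c}{i}\Bigr) = \sum_{i=1}^n \ln\Bigl(1+\frac{c}{i}\Bigr).
\]
Since $c>-1$, every factor $1+c/i$ is strictly positive: it is at least $1+c>0$ when $c<0$, and greater than $1$ when $c\ge 0$. So the logarithm is defined. It then suffices to show that the sum equals $c\ln n + O(1)$, with the $O(1)$ depending only on $c$. Exponentiating gives $\Theta(n^c)$.

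\textbf{Main step.} Split each term as $\ln(1+c/i) = c/i + r_i$, where $r_i = \ln(1+c/i) - c/i$. Then
\[
\sum_{i=1}^n \ln\Bigl(1+\frac{c}{i}\Bigr) = c\,H(n) + \sum_{i=1}^n r_i .
\]
Fix a threshold $i_0 = \lceil 2|c|\rceil$. For $i \ge i_0$ we have $|c/i|\le 1/2$, and the Taylor bound $|\ln(1+x)-x| \le x^2$ for $|x|\le 1/2$ gives $|r_i| \le c^2/i^2$. Hence $\sum_{i\ge i_0}|r_i| \le c^2\pi^2/6$, a constant.

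\textbf{Finitely many small indices.} For $i<i_0$, the number of terms is bounded by a constant depending on $c$. Each term is finite because $1+c/i>0$. This is the only place the hypothesis $c>-1$ matters: for $c\le -1$ the $i=1$ factor would be zero or negative. So $\sum_{i<i_0} |r_i|$ is a finite constant depending on $c$.

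\textbf{Conclusion.} Combining, $\sum_{i=1}^n r_i = O(1)$; in fact the series converges absolutely. Together with $H(n)=\ln n+O(1)$, the log-product is $c\ln n + O(1)$. Exponentiating gives $C_1 n^c \le \prod_{i=1}^n(1+c/i) \le C_2 n^c$ with constants $C_1, C_2>0$ depending on $c$.

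There is no real obstacle. The only points needing care are the choice of $i_0$ that makes the Taylor remainder bound valid, and the check that the early factors are positive when $-1<c<0$. Alternatively, the product equals $\Gamma(n+1+c)/(\Gamma(1+c)\Gamma(n+1))$, and the fact follows from $\Gamma(n+1+c)/\Gamma(n+1)\sim n^c$. The elementary argument above avoids Gamma asymptotics.
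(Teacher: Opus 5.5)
Your argument is correct, but it takes a different route from the paper's. The paper rewrites the product exactly as $\Gamma(n+c+1)/(\Gamma(c+1)\Gamma(n+1))$ using $\Gamma(z+1)=z\Gamma(z)$. It uses $c>-1$ only to make $\Gamma(c+1)>0$ a harmless constant. It then applies Stirling-type asymptotics, $\Gamma(z+1)=\Theta(z^{z+1/2}e^{-z})$, to get $\Gamma(n+c+1)/\Gamma(n+1)=\Theta(n^c)$. This is essentially the alternative you mention in your last sentence.

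Your comparison of the log-product with $c\,H(n)$ needs much less. It uses only $H(n)=\ln n+O(1)$, $\sum 1/i^2<\infty$, and the bound $|\ln(1+x)-x|\le x^2$ on $|x|\le 1/2$; your threshold $i_0=\lceil 2|c|\rceil$ correctly puts $c/i$ in that range. The result is self-contained, gives constants uniform in $n$, and shows plainly that $c>-1$ is needed exactly to keep the early factors positive.

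The Gamma route buys more precision: it identifies the limiting constant, $\prod_{i=1}^n(1+c/i)\sim n^c/\Gamma(1+c)$. Your method, via absolute convergence of $\sum r_i$ and $H(n)-\ln n\to\gamma$, gives only $\sim K n^c$ for some unidentified $K>0$. The price is reliance on Gamma asymptotics at non-integer arguments, which is heavier than a $\Theta$ statement requires. A trivial nit: for $c=0$ the factors equal $1$ rather than exceeding it.
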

\begin{proof}
We will prove this fact by expressing the formula using the Gamma function and approximating it. First, recall the basic property of the Gamma function:

\[
\Gamma(z+1) = z \Gamma(z) \quad \text{for all positive real numbers } z.
\]

Additionally, we know that
\[
\Gamma(n) = (n-1)! \quad \text{for all positive integers } n.
\]

Using these facts, we can express the formula as
\begin{align*}
\prod_{i=1}^n \left( 1 + \frac{c}{i} \right) &= \frac{\prod_{i=1}^n (i+c)}{n!} \\
&= \frac{1}{n!} \prod_{i=1}^n \frac{\Gamma(i+c+1)}{\Gamma(i+c)} \\
&= \frac{\Gamma(n+c+1)}{\Gamma(c+1) \Gamma(n+1)} \\
&= \Theta\left( \frac{\Gamma(n+c+1)}{\Gamma(n+1)} \right).
\end{align*}
The last equation holds because $\Gamma(c+1)>0$ for $c>-1$. From the Lanczos approximation (see \cite{Pugh04}), we can approximate the Gamma function as
\[
\Gamma(z+1) = \Theta\left( z^{z+0.5} e^{-z} \right) \quad \text{as } |z| \to \infty.
\]

Therefore,

\begin{align*}
\Theta \left( \frac{\Gamma(n+c+1)}{\Gamma(n+1)} \right) &= \Theta\left( \frac{(n+c)^{n+c+0.5} e^{-n-c}}{n^{n+0.5} e^{-n}} \right) \\
&= \Theta\left( \left( 1 + \frac{c}{n} \right)^{n+0.5} (n+c)^{c} e^{-c} \right) \\
&= \Theta\left( e^c n^c e^{-c} \right) \\
&= \Theta(n^c).
\end{align*}

\end{proof}

\section{The Greedy Algorithm in the Semi-Oblivious Cup Game}
\seclabel{cup}

Recall that in the semi-oblivious cup game, the \textbf{greedy} algorithm can select any arbitrary cup within a factor of $c$ of the fullest cup. (As the adversary can select any of those cups to have the highest estimate.) In \secref{semi_cup_game}, we provide a tight lower bound on \textbf{greedy}'s performance in the semi-oblivious cup game. In \secref{semi_flushing_game}, we provide matching upper and lower bounds on \textbf{greedy}'s performance in the semi-oblivious cup flushing game. In \secref{additive error}, we show that the performance of \textbf{greedy} does not differ significantly from the original cup game.

\subsection{The Semi-Oblivious Cup Game} \label{sec:semi_cup_game}




We begin by introducing some notation that will be used throughout \secref{cup}. There are $n$ cups, and the game starts at time $0$ with all cups empty, and the adversary begins injecting water. We assume that for each time $t$, the adversary moves first, followed by the player. We denote the height of cup $i$ at time $t$ by $f_i(t)$. We denote the height of cup $i$ at the intermediate step at time $t$ by $g_i(t)$; this is immediately after the adversary injects $1$ unit of water. We denote the estimate provided to the player by the adversary by $g_i'(t) \in [g_i(t),c\cdot  g_i(t)]$ for constant $c > 1$.  Let $ c_i(t)$ and $ d_i(t)$ denote the index of the $i$th fullest cup at time $t$, and at the intermediate step of time $t$, respectively. We denote by $f(t, m) = (1/m) \sum_{i=1}^m f_{c_i(t)}(t)$ the average of the $m$ fullest cups at time $t$. We denote by $g(t, m) = (1/m) \sum_{i=1}^m g_{d_i(t)}(t)$ the average of the $m$ fullest cups at the intermediate step at time $t$. 
    
In the following theorem, we upper bound the backlog achieved by the \textbf{greedy} algorithm in the semi-oblivious setting.

\begin{theorem}
The \textbf{greedy} algorithm achieves a backlog of $O\left(n^{\frac{c-1}{c}}\right)$ in the semi-oblivious cup game for some fixed constant $c > 1$.
\label{thm:multiplicative_upper}
\end{theorem}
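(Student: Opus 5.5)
We would adapt the Dietz--Sleator / Adler et al.\ family of invariants on prefix averages. Concretely, we aim to show by induction on $t$ that for every $m \in \{1,\dots,n\}$,
\[
f(t,m) \le B_m \quad\text{and}\quad g(t,m) \le B_m + \tfrac{1}{m},
\]
where the bound $B_m$ decreases in $m$ with $B_n = 0$ (or $O(1)$). The recursion for $B_m$ comes from a one-step analysis, and the claimed bound is $B_1$. At the intermediate step the adversary adds at most $1$ unit to the $m$ fullest cups. This gives $g(t,m) \le f(t-1,m) + 1/m$, since the $m$ fullest cups after injection carry at most the mass of the $m$ fullest before plus $1$.

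The key step is to analyze the player's removal. Greedy removes $1$ unit from some cup $j$ with $g_j(t) \ge g_{d_1(t)}(t)/c$. Fix $m$ and suppose the removed cup is among the $m$ fullest, which happens in particular when $g_j \ge g_{d_{m}}$. Then the $m$-prefix sum drops by $1$, up to clipping at $0$, and we get $f(t,m) \le g(t,m) - 1/m$, so the prefix average cannot grow. Otherwise the chosen cup lies outside the top $m$, which forces $g_{d_1}(t) \le c\, g_j(t) \le c\, g_{d_{m+1}}(t)$. In that case all of the top $m+1$ cups have height at least $g_{d_1}/c$. Hence
\[
g(t,m) \le g_{d_1}(t) \le c\cdot g(t,m+1),
\]
and more precisely $m\,g(t,m) + g_{d_{m+1}} \le (m+1)\, g(t,m+1)$ with $g_{d_{m+1}} \ge g(t,m)/c$. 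Combining these gives
\[
g(t,m) \le \frac{m+1}{m+1/c}\, g(t,m+1) = \Bigl(1 + \frac{1-1/c}{m+1/c}\Bigr) g(t,m+1).
\]
This suggests the recursion $B_m \approx \bigl(1 + \frac{(c-1)/c}{m}\bigr)B_{m+1} + O(1/m)$. Unrolling from $B_n = O(1)$ and using Fact~\ref{fact:harmonic} with exponent $(c-1)/c$, the multiplicative factors telescope to $\prod_{i=1}^{n}(1+\frac{(c-1)/c}{i}) = \Theta(n^{(c-1)/c})$. The additive $O(1/m)$ terms contribute at most $\sum_m \frac{1}{m}\cdot \Theta\bigl((n/m)^{(c-1)/c}\bigr) = O(n^{(c-1)/c})$, a convergent-type sum because the exponent is positive. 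Thus $B_1 = O(n^{(c-1)/c})$.

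The main obstacle is making the induction honest. The case split depends on $t$: at some steps greedy hits the top $m$ and at others it does not. In the first case we only know that $f(t,m)$ does not increase relative to $f(t-1,m)$. In the second case we need $f(t,m)$ bounded by the bound for $m+1$, which requires knowing $f(t,m+1) \le B_{m+1}$ at the same time $t$. Removal only decreases the $(m+1)$-prefix, so $f(t,m+1) \le g(t,m+1) \le B_{m+1} + 1/(m+1)$, which is where the $O(1/m)$ slack arises. We therefore prove all $m$ simultaneously by induction on $t$, downward in $m$ within a step. We also need to check that the order statistics reshuffling after removal (the cup from which water was removed may leave the top $m$) only helps, since prefix sums of sorted values are monotone under decreasing one entry. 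Finally, the base case $B_n$ is handled by total mass: the average over all $n$ cups is at most $1/n$ above its previous value and drops whenever greedy removes a full unit, so $f(t,n) \le 1$, since a removal of less than $1$ means the chosen cup had height below $1$, hence all cups are below $c$.
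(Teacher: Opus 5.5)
Your overall plan (prefix-average invariants, a case split on where the chosen cup sits, and the recursion with factor $1+\frac{c-1}{cm}$ unrolled via the harmonic-product fact) is the paper's, but your first case is wrong as stated. You split on whether the chosen cup $j$ is among the $m$ fullest \emph{at the intermediate step}, and then assert that the $m$-prefix sum drops by a full unit, so $f(t,m)\le g(t,m)-1/m\le f(t-1,m)$. This fails whenever $j$ leaves the top $m$ after the removal. The new top set is then $(\{d_1,\dots,d_m\}\setminus\{j\})\cup\{d_{m+1}\}$, and the sum drops only by $g_j(t)-g_{d_{m+1}}(t)<1$. For instance, take $m=1$ and suppose the two fullest cups both have height $4.2$. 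The adversary injects $0.8$ and $0.2$ into them, and greedy removes a unit from the fullest cup (height $5$). The new maximum is $4.4>4.2=f(t-1,1)$, even though the chosen cup was the top-$1$ cup. This is exactly the mechanism by which the maximum grows, so it cannot be waved away. Your remark that reshuffling ``only helps'' because sorted prefix sums are monotone yields only $f(t,m)\le g(t,m)\le f(t-1,m)+1/m$, and that does not preserve $f(t,m)\le B_m$.

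The repair is to split, as the paper does, on whether $j$ is among the $m$ fullest cups \emph{after} the removal. If it is, then it was in the top $m$ before, the top-$m$ set is unchanged, and the sum drops by the full unit. If it is not, then the new top $m$ cups (untouched by the player), together with cup $j$ at its pre-removal height, are $m+1$ distinct cups at the intermediate step. Hence $m f(t,m)+g_j(t)\le (m+1)g(t,m+1)$. Combined with $g_j(t)\ge g_{d_1}(t)/c\ge g(t,m+1)/c$, this gives $f(t,m)\le\bigl(1+\frac{c-1}{cm}\bigr)g(t,m+1)$, which covers both your second case and the bad subcase of the first.

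Some smaller slips should also be fixed:
\begin{itemize}
\item When a full unit is removed, the total mass is unchanged rather than decreasing.
\item The base bound is $f(t,n)<c$, not $f(t,n)\le 1$; the average can exceed $1$.
\item The clipping case needs the observation that all cups are then below $c$. The paper instead restricts to an interval on which the maximum is at least $c$.
\item The additive term introduced at level $m$ is amplified by $\prod_{i<m}\bigl(1+\frac{c-1}{ci}\bigr)=\Theta(m^{(c-1)/c})$, not by $(n/m)^{(c-1)/c}$. Since $\sum_{m\le n}m^{-1/c}=O(n^{(c-1)/c})$, your bound still holds. The paper sidesteps this bookkeeping by tracking $g(t,m)+\frac{c}{c-1}$, which makes the recursion purely multiplicative.
\end{itemize}
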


We first show the below lemma. Our argument inductively utilizes a series of invariants, inspired by the style of \cite{DietzSl87}.

It suffices to consider only the times when the maximum cup height exceeds $c$. To bound the backlog at an arbitrary time $t_1$, we let $t_0$ denote the last time prior to $t_1$ when all cups have height less than $c$. It is sufficient to show that in the interval $[t_0, t_1]$, the heights of all cups remain bounded by $O\bigl(n^{(c-1)/c}\bigr)$; below, we exclusively consider the interval $[t_0, t_1]$.
We may assume that $g_{d_1(t)}(t) \ge c$ for all $t \ge t_0$ in our interval $[t_0, t_1]$. From $t_0$ onward, the total amount of water in the system does not decrease until $t_1$, since the player only selects cups whose heights exceed $1$.

\begin{lemma}
For all $t\geq t_0+1, 1\leq m \leq n-1$, we have either  
$$
g(t, m) \leq g(t-1, m)
$$
or  
$$
g(t, m)+\frac{c}{c-1} \leq \left(g(t-1, m+1)+\frac{c}{c-1}\right)\left(1 + \frac{c-1}{cm}\right).
$$
\label{lem:multiplicative_lem}
\end{lemma}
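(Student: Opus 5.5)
The plan is to track how the sum of the $m$ fullest cups changes over one full round: the player's removal at the intermediate step of time $t-1$, followed by the adversary's injection at time $t$. Write $A = g(t-1,m)$ and $G = g(t-1,m+1)$, and let $j$ be the cup the player selects at time $t-1$.

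\textbf{Step 1: a lower bound on the chosen cup.} Because $g'_i \in [g_i, c g_i]$, we have $c\, g_j(t-1) \ge g'_j \ge g'_{d_1(t-1)} \ge g_{d_1(t-1)}(t-1)$. Hence $g_j(t-1) \ge g_{d_1(t-1)}(t-1)/c$. The maximum dominates any top-$k$ average, so this gives both $g_j(t-1) \ge A/c$ and $g_j(t-1) \ge G/c$.

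\textbf{Step 2: bound the top-$m$ sum after removal.} Since $t-1 \ge t_0$, the player removes a full unit from $j$. Let $S_m$ be the sum of the $m$ largest heights just after removal. Any $m$-subset either contains $j$, or it does not. A subset containing $j$ has sum at most $mA - 1$. A subset avoiding $j$ lies inside the remaining cups. Its sum is at most the sum of the $m+1$ fullest cups at time $t-1$ with $j$ deleted, which is at most $(m+1)G - g_j(t-1)$. This holds whether or not $j$ is among the top $m+1$: if it is not, the top $m$ of the others is the original top-$m$ set, and its sum equals $(m+1)G$ minus the $(m+1)$-th height, which is at least $g_j$. Therefore
$$S_m \le \max\bigl(mA - 1,\ (m+1)G - g_j(t-1)\bigr).$$

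\textbf{Step 3: add the adversary's unit and split into two cases.} The injection raises any top-$m$ sum by at most $1$, so $m\, g(t,m) \le S_m + 1$.

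If the first term attains the maximum, we get $m\, g(t,m) \le mA$, which is the first alternative $g(t,m) \le g(t-1,m)$.

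Otherwise, Step 1 gives $m\, g(t,m) \le (m+1)G - G/c + 1$, that is,
$$g(t,m) \le G\Bigl(1 + \tfrac{c-1}{cm}\Bigr) + \tfrac{1}{m}.$$
Expanding the right-hand side of the second alternative gives
$$\Bigl(G + \tfrac{c}{c-1}\Bigr)\Bigl(1 + \tfrac{c-1}{cm}\Bigr) - \tfrac{c}{c-1} = G\Bigl(1 + \tfrac{c-1}{cm}\Bigr) + \tfrac{1}{m}.$$
The two expressions match exactly, so the second alternative holds.

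The main obstacle is Step 2, specifically the case where $j$ is in the top $m$ but drops below the $(m+1)$-th cup after removal. There the naive "sum drops by $1$" argument fails, and we must switch to comparing against the top $m+1$ at time $t-1$. The bound $(m+1)G - g_j$ handles this case together with the case $j \notin$ top $m$. The multiplicative-error guarantee $g_j \ge G/c$ is exactly what makes the constant $\frac{c}{c-1}$ appear.
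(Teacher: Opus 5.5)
Your proof is correct and follows the paper's argument. The two terms of your maximum correspond to the paper's Case~1 (the selected cup stays among the $m$ fullest after removal, so the top-$m$ sum drops by exactly $1$) and Case~2 (it does not, so you compare against $(m+1)G - g_j(t-1)$ and use $g_j(t-1) \ge G/c$), with the same $+1$ for the adversary's injection. Your explicit treatment of the sub-case where $j$ lies outside the top $m+1$ is just a more careful version of a step the paper states in one line.
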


\begin{proof}
We denote the index of the cup selected by the player at time $t-1$ by $i$.

\paragraph{Case 1: \boldmath Cup $i$ is among the $m$ fullest cups at time $t$.} 
In this case, cup $i$ is one of the $m$ fullest cups at time $t$ even though the player just removed $1$ unit of water from it. This means that it certainly must have been one of the $m$ fullest cups before the player's turn. Thus we have
\begin{align*}
    f(t, m) = g(t-1, m) - \frac{1}{m},
\end{align*}
which follows from the fact that the set of the $y$ fullest cups does not change and that $t-1 \geq t_0$.
On the other hand, we know that 
\begin{align*}
    g(t, m) \leq f(t, m) + 1/m.
\end{align*}
Combining this with the previous equality, we obtain the desired inequality:
\[g(t, m) \leq g(t-1, m).\]

\paragraph{Case 2: \boldmath Cup \( i \) is not among the \( m \) fullest cups at time \( t \).}  

Since cup \( i \) was selected by the player, we have that 
\[
g_i(t-1) \geq \frac{g_{c_1(t-1)}(t-1)}{c},
\]
i.e., it was within a multiplicative factor \( c \) of the fullest cup. Then,
\begin{align*}
f(t, m) &= \frac{\sum_{i=1}^m f_{c_i(t)}(t)}{m} \\
&= \frac{\left( \sum_{i=1}^m f_{c_i(t)}(t) \right) + g_i(t-1) - g_i(t-1)}{m} \\
&\leq \frac{g(t-1, m+1)(m+1) - g_i(t-1)}{m}\\
&\leq \frac{g(t-1, m+1)(m+1) - \frac{g(t-1, m+1)}{c}}{m} \\
&= g(t-1, m+1)\left(1 + \frac{c-1}{cm}\right).
\end{align*}

The third line follows since we have 
\[
g(t-1, m+1)(m+1) \geq f(t, m) m + g_i(t-1).
\]
This holds because the $m$ fullest cups at time $t$ were all present with their exact heights at the intermediate step during $t-1$ (none of them were selected by the player), and so is cup $i$ with height $g_i(t-1)$ (which we know to be distinct from the $y$ fullest cups at time \( t \), by the case assumption). Therefore, the sum of the \( m+1 \) fullest cups at the intermediate step at time \( t-1 \) is at least \( m\cdot f(t, m) + g_i(t-1) \).

The fourth line follows from the inequality, 
\begin{align*}
g_i(t-1) &\geq \frac{g_{c_1(t-1)}(t-1)}{c} \\
&\geq \frac{g(t-1,m+1)}{c}.
\end{align*}
From the above result, we can deduce the desired inequality.
\begin{align*}
g(t,m)+\frac{c}{c-1}&\leq f(t,m)+\frac{1}{m}+\frac{c}{c-1}\\
&\leq \left(g(t-1, m+1)+\frac{c}{c-1}\right)\left(1 + \frac{c-1}{cm}\right)
\end{align*}
\end{proof}

Having completed the proof of the lemma, we now use the recursive guarantee of \lemref{multiplicative_lem} (as well as Fact \ref{fact:harmonic}) to complete the proof of the theorem.
\begin{proof}[Proof of \thmref{multiplicative_upper}]
We will prove the theorem by showing the inductive hypothesis,
\[
g(t,m) + \frac{c}{c-1} \leq \left(c+1 + \frac{c}{c-1}\right) \prod_{i=m}^{n-1} \left(1 + \frac{c-1}{ci}\right),
\]
for all $t \geq t_0$, $1 \leq m \leq n$. (We follow the standard convention that a product over an empty index set equals 1; that is, for $m=n$, we define $\prod_{i=m}^{n-1} \left(1 + \frac{c-1}{ci}\right))=1$.) We label the parameterized inequality as $H(t, m)$.

For the base cases $H(t_0,m)$ and $H(t,n)$, we already know that
\[
g(t_0, m) + \frac{c}{c-1} \leq g(t_0 - 1, m) + 1 + \frac{c}{c-1} < c + 1 + \frac{c}{c-1} \quad \text{for all } 1 \leq m \leq n,
\]
and
\[
g(t, n) + \frac{c}{c-1} = g(t_0, n) + \frac{c}{c-1} \leq c + 1 + \frac{c}{c-1} \quad \text{for all } t \geq t_0,
\]
since time $t = t_0$ is defined to be the moment when the total amount of water no longer increases.

We will induct on $t - m$ to prove $H(t,m)$. When proving $H(t,m)$, we assume that $t > t_0$, $m < n$, and that $H(t-1,m)$ and $H(t,m+1)$ are true. By Lemma~\ref{lem:multiplicative_lem}, we know that $g(t,m) \leq g(t-1,m)$ or
\[
g(t,m) + \frac{c}{c-1} \leq \left(g(t-1, m+1) + \frac{c}{c-1}\right) \left(1 + \frac{c-1}{cm}\right).
\]
For the first case, the inequality
\[
g(t,m) + \frac{c}{c-1} \leq g(t-1,m) + \frac{c}{c-1} \leq \left(c+1 + \frac{c}{c-1}\right) \prod_{i=m}^{n-1} \left(1 + \frac{c-1}{ci}\right)
\]
holds by the inductive hypothesis $H(t-1,m)$. 

For the second case, the inequality
\begin{align*}
g(t,m) + \frac{c}{c-1} &\leq \left(g(t-1,m+1) + \frac{c}{c-1}\right) \left(1 + \frac{c-1}{cm}\right) \\
&\leq \left(c + 1 + \frac{c}{c-1}\right) \prod_{i=m}^{n-1} \left(1 + \frac{c-1}{ci}\right)
\end{align*}
holds by the inductive hypothesis $H(t,m+1)$.

Finally, we have proven the proposition $H(t,m)$ for all $t \geq t_0$, $1 \leq m \leq n$. By setting $m=1$, we conclude
\[
g(t,1) \leq \left(c+1 + \frac{c}{c-1}\right) \prod_{i=1}^{n-1} \left(1 + \frac{c-1}{ci}\right).
\]
Using Fact~\ref{fact:harmonic}, we know that 
\[
\prod_{i=1}^{n-1} \left(1 + \frac{c-1}{ci}\right) = O\left(n^{\frac{c-1}{c}}\right),
\]
so we conclude that the \textbf{greedy} algorithm achieves a backlog of $O\left(n^{\frac{c-1}{c}}\right)$.
\end{proof}

We now show a matching lower bound, proving that $\Theta(n^{\frac{c-1}{c}})$ is the asymptotically tight worst-case backlog for the \textbf{greedy} algorithm in the semi-oblivious cup game.
\begin{theorem}
The \textbf{greedy} algorithm achieves backlog $\Theta(n^{\frac{c-1}{c}})$ in the semi-oblivious cup game for some fixed constant $c>1$.
\end{theorem}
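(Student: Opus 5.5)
The upper bound $O(n^{\frac{c-1}{c}})$ is exactly \thmref{multiplicative_upper}, so what remains is a matching lower bound $\Omega(n^{\frac{c-1}{c}})$. My plan is an adversary strategy that makes the recurrence of \lemref{multiplicative_lem} essentially tight. In that lemma, the only loss relative to the classical Dietz--Sleator argument comes from the player removing water from a cup of height only about $1/c$ times the fullest. The adversary will use its control of the estimates $g'_j(t)\in[g_j(t),c\,g_j(t)]$ to force exactly this. Concretely, the adversary reports the true height for every cup except one designated ``sacrificial'' cup. For that cup it reports $c$ times its true height. As long as that cup's true height $s$ satisfies $s \ge \max_j g_j/c$, greedy (breaking ties adversarially) selects it.

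The strategy proceeds in phases indexed by an active-set size $m=n,n-1,\dots,2$. At the start of phase $m$ there are $m$ active cups, all at a common height $a_m$ (up to $O(1)$ rounding). One active cup is designated sacrificial. On each step the adversary splits its unit of water evenly among the other $m-1$ active cups, and the player removes $1$ from the sacrificial cup. By conservation of water in the active set, if the other cups have height $b$, the sacrificial cup has height $s = m a_m - (m-1)b$. The phase ends when $s$ drops to about $b/c$, that is, when
\[
b = a_m\cdot\frac{m}{m-1+1/c} = a_m\left(1+\frac{(c-1)/c}{m-1+1/c}\right).
\]
The sacrificial cup is then discarded (the adversary never feeds it again and never needs the player to touch it), and phase $m-1$ starts with $a_{m-1}=b$. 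Unrolling the recurrence gives
\[
a_1 \approx a_n\prod_{m=2}^{n}\left(1+\frac{(c-1)/c}{m-1+1/c}\right).
\]
Fact~\ref{fact:harmonic}, applied with constant $(c-1)/c$ after an index shift and a comparison of $m-1+1/c$ with $m-1$, shows this product is $\Theta(n^{\frac{c-1}{c}})$. Hence the last active cup reaches height $\Omega(n^{\frac{c-1}{c}})$ provided $a_n=\Omega(1)$.

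The steps, in order, are as follows.
\begin{enumerate}
\item Bootstrap: bring $\Theta(n)$ cups (say $n/2$ of them, which changes the exponent only by a constant factor) to a common height of at least some constant $K\ge c$.
\item Run the phases above, verifying at each step that greedy's choice is consistent with legal estimates.
\item Control the discretisation errors.
\end{enumerate}

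I expect the main obstacle to be the interplay of the bootstrap with the ``removes up to $1$'' rule. A player can only remove a full unit from a cup of height at least $1$, and phases end at non-integer times, so each phase incurs an additive error of order $O(1/m)$ in $a_{m-1}$. I would handle this as in the proof of \thmref{multiplicative_upper}, by tracking $a_m - K$ for a suitable constant $K$, which turns the recurrence into an affine one whose error terms are absorbed by the multiplicative growth. For the bootstrap, I would first play the classical non-oblivious adversary strategy restricted to estimates equal to true heights. Alternatively, I would let the adversary pour water uniformly into $n/2$ cups while the player is forced to drain the other cups, which is possible because total water only decreases when the player removes from a nonempty cup. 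I would check carefully that this yields $\Omega(n)$ cups at height at least a constant before the main phases begin.
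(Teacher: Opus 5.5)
Your proposal is correct and is essentially the paper's proof. It uses the same sacrificial-cup adversary: report $c$ times the true height of one active cup, feed the other active cups equally until that cup falls to roughly a $1/c$ fraction of them, then discard it. It also uses the same conservation of water on the active set and Fact~\ref{fact:harmonic} for the product.

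The paper's bookkeeping is slightly leaner. It bootstraps $n-2$ cups to height $1$ by pouring $1/c$ into a decoy cup each step, and runs the recurrence on $x\,h(x)+1$. That quantity absorbs the end-of-phase overshoot without any shift $K$; the overshoot is really $O(1/x^2)$ per phase in the average, not $O(1/x)$. Note that your bootstrap to height $K$ cannot use $n/2$ cups once $K\ge 2c$, because total water always stays below $cn$. Your Dietz--Sleator option, on about $ne^{-K}=\Theta(n)$ cups, does work.
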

\begin{proof}
We will define a suitable strategy for the opponent to induce backlog $\Theta(n^{(c-1)/c})$.
First, we can assume that the player picks the cups that the adversary prefer among those that contain at least $\frac{g_{d_1(t)}}{c}$ units of water. The following is an algorithm for the adversary:

\begin{enumerate}
    \item Step $1$: First, the adversary adds water to the cups so that cup $n$ receives exactly $\frac{1}{c}$ units of water, and the remaining water is distributed equally among the other cups. The adversary will repeat this process while the other cups contain less than 1 unit of water. Thus, we can assume that the player removes water from cup $n$, which contains exactly $\frac{1}{c}$ units of water. After this, we can ensure that there are $n-1$ cups, each with at least 1 unit of water, at the intermediate step at time $t$. At the player's turn, we can assume that the player chooses cup $n-1$, and then there are at least $n-2$ cups with a height of at least 1.

    \item Step $2$: Set $x = n - 2$. We assume that all cups from 1 to $x$ contain the same amount of water, $f_1(t)$, after the player's turn. The adversary will add water to only $x - 1$ of the cups (excluding cup $x$) equally. We repeat this process while $f_1(t) + \frac{1}{x - 1} \leq c \cdot f_x(t)$. Thus, we can assume that the player removes water from cup $x$.

    \item Step $3$: Reduce $x$ by $1$ and repeat Step 2 while $x \geq 2$.
\end{enumerate}

We will show that this algorithm achieves a backlog of $\Theta(n^{\frac{c-1}{c}})$.

Let $t(x)$ denote the time when $x$ decreases from $x+1$. Let $h(x) = f_1(t(x))$ denote the amount of water in each of the $x$ cups at the time $t(x)$.

We know that
\[
h(x) + \frac{1}{x} > c \cdot f_{x+1}(t(x)),
\]
because this inequality is established in Step 2. Also, note that the total amount of water in cups $1$ to $x + 1$ does not decrease during the time interval $t(x + 1) \leq t \leq t(x)$. So,
$$
h(x+1) \cdot (x+1)\leq h(x) \cdot x + f_{x+1}(t(x))< h(x)\left(x+\frac{1}{c}\right)+\frac{1}{cx}.
$$
We can write it as 
\[
h(x+1)\cdot(x+1)+1<(h(x)\cdot x+1)\left(1+\frac{1}{cx}\right).
\]
Also, note that: $h(n - 2) \cdot (n - 2)+1 \geq n - 1$.

From the inequalities above:
\[
h(1) + 1 > (n-1) \cdot \prod_{i=1}^{n-3} \frac{ci}{ci + 1}.
\]

And from Fact~$\ref{fact:harmonic}$:
\begin{align*}
(n-1) \cdot \prod_{i=1}^{n-3} \frac{ci}{ci + 1} &> (n-1) \cdot \prod_{i=1}^{n-3} \frac{ci - 1}{ci}, \\
&= \Theta((n-1) \cdot (n-3)^{-1/c}), \\
&= \Theta(n^{\frac{c - 1}{c}}).
\end{align*}

\end{proof}

We observe that the above lower bound applies, essentially as is, to any deterministic algorithm. The only modification, is that for step $2$ of the algorithm, the adversary will fill the $x-1$ of the cups that it can see will be least frequently chosen by the player, allowing the most frequently chosen cup to decay to height more than a factor of $c$ smaller than the remaining $x-1$. 

Thus, the \textbf{greedy} algorithm achieves worst-case backlog in the semi-oblivious cup game of $\Theta(n^{\frac{c-1}{c}})$. This exponential drop-off is a remarkably far cry from the $\Theta(\log n)$ performance of the \textbf{greedy} algorithm in the canonical cup game. In short, selecting cups that are off by a constant factor from the fullest sinks the backlog guarantees of \textbf{greedy}.

If the \textbf{greedy} algorithm fails in the semi-oblivious cup game, a natural follow-up question is, what about in the semi-oblivious cup flushing game? Recall that, here, the player, benefiting from resource augmentation, empties the selected cup entirely, reducing its height to $0$. The lower bound described above no longer applies, given that it relies on a notion of `conservation of water,' that doesn't hold for the cup flushing game. Intuition suggests that the \textbf{greedy} algorithm should exhibit good performance when given unlimited resource augmentation, even in the semi-oblivious model. Even if it selects a cup that is off by a constant factor of $c$, the fact that the amount of water removed from the selected cup is unconstrained (and proportional to the height of the fullest cup) suggests that there is still hope. Indeed, in the next subsection, we will provide matching upper and lower bounds for \textbf{greedy} in this setting that constitute a significant improvement over $\poly(n)$.

\subsection{The Semi-Oblivious Cup Flushing Game}
\label{sec:semi_flushing_game}

\begin{theorem}
    The \textbf{greedy} algorithm achieves worst-case backlog $e^{O(\sqrt{\log n})}$ in the semi-oblivious cup flushing game.
\end{theorem}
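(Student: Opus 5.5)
The plan is a Dietz--Sleator style argument in which the invariant is a product recursion across \emph{levels} rather than across single cups. Flushing removes a whole cup of height at least $M/c$, where $M$ is the current maximum; this is what should let us beat the $n^{(c-1)/c}$ bound of \thmref{multiplicative_upper}. The key observation is as follows. Fix a group of the $m$ fullest cups. Suppose the player flushes a cup inside this group. That cup has height at least $M/c \ge g(t,m)/c$, so flushing it and replacing it by the next cup changes the top-$m$ average at worst like
\[
g(t+1,m) \le g(t,m+1)\Bigl(1+\tfrac{1}{m}\Bigr) - \tfrac{g(t,m+1)}{cm} + \tfrac{1}{m}.
\]
This is the same recursion as in \lemref{multiplicative_lem}, and taken alone it only gives polynomial bounds. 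So single-step recursions are too weak, and the argument must exploit the fact that flushing wipes out the \emph{entire} height of a cup.

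Instead we aggregate over geometric scales. Let $k=\lceil\sqrt{\log n}\rceil$ and consider group sizes $m_j = n/2^{jk}$ for $j=0,1,\dots,k$. For each $j$ we aim for an invariant $g(t,m_j) \le B_j$, with $B_{j+1} \le C\cdot B_j$ for an absolute constant $C=C(c)$ and $B_0=O(1)$; then $B_k \le C^{k}=e^{O(\sqrt{\log n})}$, and $m_k \ge 1$ bounds the backlog. To go from $B_j$ to $B_{j+1}$, consider a maximal interval during which $g(\cdot,m_{j+1})$ exceeds $C B_j$. Throughout this interval the maximum cup exceeds $CB_j$, so every flush removes a cup of height at least $CB_j/c$. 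That is much larger than the typical height $B_j$ of cups in the top $m_j$ set. The flushed cups are therefore essentially forced to lie among the heavy top cups.

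Now do water accounting on the top $m_j$ cups. Over an interval of length $T$, the adversary adds at most $T$ units, while the player removes at least $(CB_j/c)$ units per step from cups that are heavy. Hence the number of distinct heavy cups that can be sustained is bounded. At the same time, the adversary needs about $2^{k}$ times more concentration to lift the top $m_{j+1}$ average, which is a $2^{-k}$ fraction of the group, by a factor $C$ above $B_j$. Choosing $C$ relative to $c$ should make these two counts incompatible. The main obstacle is this step. A cup that is flushed can be refilled, and cups move between the levels, so the accounting must be done with a potential such as $\sum_i \phi(f_i)$ restricted to the top $m_j$ cups, or via the ``last time below threshold'' trick used before \lemref{multiplicative_lem}. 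I would first attempt the potential version, using the strict gain of a factor $c$ per flush to absorb the $1/m$ drift terms.

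Finally, optimizing the number of levels $k$ against the per-level loss $C^{k}$ and the $2^{k}$ group-size ratio gives $\exp(O(k + \log n/k))$ when $k=\sqrt{\log n}$. This balance is exactly what produces $e^{O(\sqrt{\log n})}$ and explains why the bound is tight with respect to this scheme.
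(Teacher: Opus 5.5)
\paragraph{Main gap: the per-level invariant is false.} The step you flag as ``the main obstacle'' is the entire proof, and as you have set it up it cannot be carried out. The invariant $g(t,m_{j+1})\le C\,B_j$, with $C=C(c)$ independent of $j$ and of the ratio $m_j/m_{j+1}=2^k$, already fails at $j=0$. To see this, run the adversary from \thmref{flushing_lower_bound}. Starting from $n-1$ cups at height $1$, it repeatedly converts $k'$ cups at height $x$ into $\lfloor k'/(x+1)\rfloor$ cups at height $cx$. Reaching height $c^i$ therefore consumes a factor $\prod_{j=0}^{i}(\lceil c^j\rceil+1)=e^{\Theta(i^2)}$ of the cups. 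With $i=\Theta(\sqrt{k})$ rounds it leaves at least $m_1=n/2^k$ cups at height $c^i=e^{\Theta(\sqrt{k})}$. Meanwhile $g(t,n)=O(1)$ at all times: whenever the maximum is at least $c$, the flushed cup holds at least one unit, so total water stays below about $cn$. Hence $B_1/B_0\ge e^{\Omega((\log n)^{1/4})}$.

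More generally, the worst-case profile is $\log g(t,m)=\Theta(\sqrt{\log(n/m)})$ for $m\le n/2$. So any level-by-level scheme needs level-dependent factors that are super-constant at the bottom levels; only their product is $e^{O(k)}$. Your water-accounting sketch has no mechanism that makes the achievable factor depend on the current height scale, and that dependence is exactly what is required. For the same reason, the closing trade-off $\exp(O(k+\log n/k))$ is not where $\sqrt{\log n}$ comes from. It comes from the fact that multiplying heights by $c$ at scale $x$ costs a factor of roughly $x$ in the number of cups, so height $c^i$ costs $c^{\Theta(i^2)}$ cups.

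\paragraph{Missing idea: a potential that encodes this cost.} Once heights are weighted correctly, no level decomposition is needed. The paper uses the single potential $\Phi(t)=\sum_j\int_1^{\max(f_j(t),1)}e^{\alpha\log^2x}\,dx$ with $\alpha=1/(4\log^2c)$, taking $c\ge e^2$ without loss of generality. One adversary step raises $\Phi$ by at most $e^{\alpha\log^2\ell}$, where $\ell$ is the current maximum. Flushing a cup of height at least $\ell/c$ lowers $\Phi$ by $\int_1^{\ell/c}e^{\alpha\log^2x}\,dx$. To compare the two, write $e^{\alpha\log^2\ell}=e^{1/4}+\int_c^\ell\frac{d}{dx}e^{\alpha\log^2x}\,dx$, substitute $x=cu$, and use $e^{\alpha\log^2(cu)}=e^{1/4}u^{1/(2\log c)}e^{\alpha\log^2u}$. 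This gives $e^{\alpha\log^2\ell}\le e^{1/4}+\tfrac12\int_1^{\ell/c}e^{\alpha\log^2u}\,du$, so $\Phi$ strictly decreases whenever $\ell\ge4c$. Thus $\Phi=O(n)$ forever. A single cup of height $h$ already contributes $\int_1^h e^{\alpha\log^2x}\,dx$, which forces $h\le e^{O(\sqrt{\log n})}$.

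You mention a potential $\sum_i\phi(f_i)$ only as a fallback. You do not specify $\phi$ or the inequality $\phi'(\ell)\le O(1)+\tfrac12\phi(\ell/c)$ that it must satisfy, and finding such a $\phi$ is the whole content of the proof.
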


\begin{proof}
We note that the overall style of our argument is similar to that of~\cite{BenderFaKu19}, whose potential function argument can be used to analyze any algorithm that selects cups within $O(1)$ \emph{additively} of the fullest cup, in the cup flushing game. They showed that such an algorithm will still achieve backlog $O(\log n)$. Our potential function is substantially different (the potential function of \cite{BenderFaKu19} is an exponential potential function).

Our main insight is to use the potential function
\[\Phi(t) = \sum_{j= 1}^n \int_{1}^{\max(f_j(t), 1)} e^{\alpha \log^2(x)} dx,\]
where $\log$ denotes the natural logarithm, and the constant $\alpha$ will be set later in the analysis to $1/(4 \log^2(c))$.

First, how does the adversary's injection of water into the cups at time $t$ affect $\Phi$? Suppose that the fullest cup during the intermediate step at time $t$ (after the adversary has injected water, but before the player has selected its cup) has height $\ell$. Assume that $\ell \geq 2$. Let $\Phi'(t)$ denote the potential during the intermediate step between $t$ and $t+1$. Then we have 
\[\Phi'(t) \leq \Phi(t) + e^{\alpha\log^2 \ell}.\]
The adversary is able to increase $\Phi$ by at most $e^{\alpha\log^2 \ell}$ during its turn. During the player's turn, some cup with height at least $\ell/c$ is selected to be flushed. The resulting decrease in potential gives us, for $\ell > c$,
\[\Phi(t+1) \leq \Phi'(t) - \int_1^{\ell/c} e^{\alpha\log^2 x} dx.\]
Substituting in for $\Phi'(t)$, we have
\begin{equation}  \label{eq:potential_change}
\Phi(t + 1) \leq \Phi(t) + e^{\alpha\log^2 \ell} - \int_1^{\ell/c} e^{\alpha\log^2 x} dx.
\end{equation}
Observe that
\begin{align*}
    e^{\alpha\log^2 \ell} &= e^{\alpha \log^2 c} + \int_c^\ell \frac{d}{dx} e^{\alpha\log^2 x} dx \\
    &= e^{\alpha \log^2 c} + \int_c^\ell \frac{2\alpha \cdot e^{\alpha\log^2 x} \log(x)}{x} dx \\
    &= e^{\alpha \log^2 c} + \int_1^{\ell/c} \frac{2\alpha \cdot e^{\alpha \log^2(cu)} \log(cu)}{cu} cdu \\
    &= e^{\alpha \log^2 c} + \int_1^{\ell/c} \frac{2\alpha \cdot e^{\alpha \left(\log^2(u) + 2 \log(c) \log(u) + \log^2(c)\right)} \log(cu)}{u} du \\
    &= e^{\alpha \log^2 c} + \int_1^{\ell/c} \frac{2\alpha \cdot e^{\alpha \log^2(u) + 2\alpha \log c \log(u) + \alpha \log^2(c)} \log(cu)}{u} du \\
    &= e^{0.25} + \int_1^{\ell/c} \frac{0.5/\log^2(c) \cdot e^{\alpha \log^2(u) + 0.5 \log(u)/\log(c) + 0.25} \log(cu)}{u} du \\
    &\leq e^{0.25} + \frac{0.5 e^{0.25}}{\log^2(c)} \int_1^{\ell/c} \frac{e^{\alpha \log^2(u)} \log(cu)}{\sqrt{u}} du \\
    &< e^{0.25} + \frac{0.5 e^{0.25}}{\log(c)} \int_1^{\ell/c} e^{\alpha\log^2(u)} du.
    \end{align*}
On the third line, we substitute $u = x/c$. On the sixth, we use our choice $\alpha = 1/(4\log^2(c))$. We use $c \geq e^2$. Note that for smaller values of $c$, our bound for $c = e^2$ clearly applies as well. To justify the final line, we note that for $u \geq 1$, we have that $\log(cu)/\sqrt{u} \leq log(c)$ when $c\geq e^2 $. Substituting into equation \eqref{eq:potential_change}, we have
\begin{align*}
\Phi(t+1) &\leq \Phi(t) + e^{0.25} - \left(1 - \frac{0.5 e^{0.25}}{\log(c)}\right) \int_1^{\ell/c} e^{\alpha\log^2 x} dx \\
&\leq \Phi(t) + 2 - 0.5 \int_1^{\ell/c} e^{\alpha\log^2 x} dx
\end{align*}
For $\ell \geq 4c$, this yields $\Phi(t+1) < \Phi(t)$.

We know that initially, we have $\Phi(0) = 0$. Furthermore, if the tallest cup has height at least $4c = \Theta(1)$ during the intermediate step before $t$, then $\Phi(t) < \Phi(t-1)$. And we know that if the tallest cup has height less than $4c$ during the intermediate step before $t$, then $\Phi(t) < \sum_{j=1}^n \int_1^{4c} e^{\alpha\log^2(x)} dx = O(n)$. Thus we can conclude that $\Phi(t)$ is bounded by $O(n)$ over all time steps. Thus no cup can achieve height more than $e^{O(\sqrt{\log n})}$ against the \textbf{semi-oblivious greedy} algorithm.

\end{proof}

\begin{theorem} \label{thm:flushing_lower_bound}
    The adversary can achieve backlog $e^{\Omega(\sqrt{\log n})}$ against the \textbf{greedy} algorithm in the semi-oblivious cup flushing game.
\end{theorem}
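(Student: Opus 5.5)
We build the adversary in phases, in the style of the standard $\Omega(\log n)$ lower bound for the cup flushing game, but each phase gains a multiplicative factor close to $c$ rather than an additive amount. Write $k \approx \sqrt{\log n}$. The claim to establish by induction on $j$ is that the adversary can end phase $j$ with a set $S_j$ of about $n_j$ cups, each of height at least $h_j$, where $h_{j+1} \ge \beta h_j$ for a constant $\beta>1$ (for instance $\beta = \sqrt{c}$). If the number of cups needed satisfies $n_{j} \approx n_{j+1}\cdot e^{O(j)}$, then the product $\prod_{j\le k} e^{O(j)} = e^{O(k^2)}$ must be at most $n$. That allows $k=\Theta(\sqrt{\log n})$ phases and hence a final height $\beta^{k} = e^{\Omega(\sqrt{\log n})}$.

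A single phase goes as follows. We start with $m$ cups at height roughly $h$, and every other cup is nearly empty. The key use of semi-obliviousness is that the adversary may report inflated estimates, so greedy can be steered onto any cup whose height is at least $\ell/c$, where $\ell$ is the current maximum. The adversary therefore keeps pouring water into a shrinking subset of the active cups. Any cup whose height falls below a $1/c$ fraction of the tallest is never chosen, and it simply sits there at its height. More precisely, the adversary spreads the unit of water each step evenly over the $r$ cups it is currently raising. Meanwhile the player is made to flush cups that the adversary has \emph{abandoned}: those are still at least $\ell/c$ tall but are no longer being fed. Each flush then hits a cup the adversary no longer needs, so the raised set keeps its water.

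The difficulty is that the flushed abandoned cups lose their height. The plan is to arrange the heights geometrically, as a ladder of levels $h, \beta h, \beta^2 h,\dots$. A cup at level $i$ only needs to stay eligible, meaning within factor $c$ of the maximum, while the adversary is building level $i+1$. Once the top level exceeds $c$ times a lower level, that lower level stops being eligible, so it is no longer flushed, and it remains available as a preserved reservoir.

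So each phase uses the abandoned pile, meaning cups within a factor $c$ of the top, as sacrificial targets that absorb the flushes. We need enough of them: about $h_{j+1}\cdot(\text{cups raised})$ time steps, with one target consumed per step. This is where the count $n_j/n_{j+1}$ grows with $j$, roughly like $\beta^{j}$ or like $e^{O(j)}$. The resulting recursion, $\log n_0 \gtrsim \sum_j O(j) = O(k^2)$, gives the $\sqrt{\log n}$ exponent and matches the $e^{\alpha\log^2 x}$ potential in the upper bound. This matching is a good sanity check on the target growth rate.

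The main obstacle will be the bookkeeping in one phase. We must ensure that throughout the phase there is always an eligible sacrificial cup, i.e. one of height at least $\ell/c$ that the adversary does not care about, so that greedy never has to flush a cup in the set being raised. We must also count the sacrificial cups consumed. My first attempt is to choose $\beta$ with $\beta^2 \le c$: then cups from the previous level stay eligible throughout the construction of the next level and serve as the sacrificial pool. The number of steps in a phase is about $m_{j+1}h_{j+1}$, and that forces $m_j \ge m_{j+1}(1+h_{j+1})$. With $h_j=\beta^j$ this gives $\log m_0 \approx \sum_j j\log\beta = \Theta(k^2)$, as desired.
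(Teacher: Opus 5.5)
Your proposal is correct and is essentially the paper's argument. Starting from $k$ cups at height $x$, the paper raises $\lfloor k/(x+1)\rfloor$ of them to height $cx$ while steering greedy onto the remaining cups, which stay within factor $c$ of the maximum; this is exactly your recursion $m_j \ge m_{j+1}(1+h_{j+1})$, giving $\log n \gtrsim \sum_{j\le k} j$. The paper takes $\beta = c$ (after assuming $c<3/2$ without loss of generality), and your more conservative $\beta=\sqrt{c}$ costs only a constant factor in the exponent.

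The one piece you leave implicit is the bootstrap to $h_0=1$. Once any cup is positive, an empty cup can never be selected, so the sacrificial mechanism cannot start from empty cups. The paper handles this by pouring $1/c$ into a single decoy cup each step, which greedy is steered to flush, and spreading the remaining water over the other cups until $n-1$ of them reach height $1$.
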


\begin{proof}
    First, we can assume that $c < \tfrac{3}{2}$ in the proof of this lower bound, since a larger $c$ only makes the adversary stronger. The adversary can fill all cups except one to height 1, because it can pour $\frac{1}{c}$ units of water into cup $1$ and distribute the remaining water among the others. During this process, the adversary can give estimates for which cup $1$ is maximal, meaning the player will always select cup $1$ during this initial setup procedure. Thus, we may assume that there are $n - 1$ cups with height 1. The adversary will apply the following strategy, given initially $k$ cups at height $x$. The adversary will select $\lfloor k/(x + 1) \rfloor$ cups. The adversary will add water to these cups until they achieve height $cx$. Meanwhile, the player will be directed to flush from the remaining $k - \lfloor k/(x + 1) \rfloor$ cups. (These cups have height $x$, which is within a factor of $c$ of the adversary's selected cups, which are approaching $cx$.) Note that the number of time steps required by the adversary is $(c-1)x \lfloor k / (x + 1) \rfloor+1 \leq (c-1)xk/(x +1)+1$. On the other hand, it will take the player $k - \lfloor k/(x + 1) \rfloor \geq kx/(x+1) \geq (c-1)kx/(x + 1)+1$ time steps to flush their assigned cups. Thus this strategy allows the adversary to turn $k$ cups at height $x$ into $\lfloor k/(x+1) \rfloor$ cups at height $cx$. 

    If the adversary applies this strategy recursively, starting with $n-1$ cups at height $1$, then the number of cups required to achieve height $c^i$ is
    \begin{align*}
     \prod_{j=0}^i (\lceil c^j\rceil + 1) &\leq \prod_{j=1}^{i+1} (c+2)^j \\
     &= (c+2)^{(i+2)(i+1)/2} \\
     &= (c+2)^{O(i^2)}.
    \end{align*}
    Thus, given $n$ cups, the adversary can achieve height $e^{\Omega(\sqrt{\log n})}$.
\end{proof}


We observe that the lower bound of Theorem \ref{thm:flushing_lower_bound} extends to any deterministic algorithm in the semi-oblivious cup flushing game. The adversary is still able to use the strategy of obtaining $\lfloor k/(x+1) \rfloor$ cups at height $2x$, given $k$ cups initially at height $x$. The adversary simply shows estimates of each of the $k$ cups as having height $2x$ (until the cup is emptied by the player), and selects the $\lfloor k/(x+1) \rfloor$ cups which it computes will be the last to be emptied by the deterministic player given these estimates. As shown above, the adversary has enough time to push the selected cups to height $2x$, and the rest of the argument follows identically.

\john{Probably worth explicitly extending this to randomized algorithms. The adversary just selects the cups that are least likely to be chopped by player. Then the adversary will get (in expectation) at least $k/x^2$.}

\subsection{The Additive Error Regime} \label{sec:additive error}

We use the same notation as used previously in section~\secref{cup}. Recall that in the additive error setting, we have $g_i'(t) \in [g_i(t), g_i(t)+c]$, for $c > 0$.

\begin{theorem}
The \textbf{greedy} algorithm achieves backlog $(c+1)\log n+O(c+1)$ in the semi-oblivious additive error setting.
\label{thm:additive_upper}
\end{theorem}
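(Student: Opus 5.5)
The plan is to mirror the Dietz--Sleator-style invariant argument used for \thmref{multiplicative_upper}, with the multiplicative slack replaced by an additive one. The only change is the guarantee on the selected cup. If greedy picks cup $i$ at time $t-1$, then
\[
g_i(t-1) + c \ge g'_i(t-1) \ge g'_{d_1(t-1)}(t-1) \ge g_{d_1(t-1)}(t-1),
\]
so $g_i(t-1) \ge g_{d_1(t-1)}(t-1) - c \ge g(t-1,m+1) - c$. As before, fix the time $t_1$ of interest and let $t_0$ be the last time before it at which every cup has height below $c+1$. On $[t_0,t_1]$ the fullest cup at the intermediate step has height at least $c+1$, so the selected cup has height at least $1$. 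Hence the total water never decreases on this interval, and every removal takes off exactly one unit.

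First I would prove the analogue of \lemref{multiplicative_lem}: for $t \ge t_0+1$ and $1 \le m \le n-1$, either $g(t,m) \le g(t-1,m)$, or
\[
g(t,m) \le g(t-1,m+1) + \frac{c+1}{m}.
\]
\textbf{Case 1} (cup $i$ is among the $m$ fullest cups at time $t$) goes through verbatim. In \textbf{Case 2}, the same counting gives
\[
m\, f(t,m) \le (m+1)\, g(t-1,m+1) - g_i(t-1) \le m\, g(t-1,m+1) + c.
\]
Therefore $f(t,m) \le g(t-1,m+1) + c/m$. Adding the at most $1/m$ contributed by the adversary's injection gives $g(t,m) \le g(t-1,m+1) + (c+1)/m$.

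Second, I would induct on $t-m$, exactly as in the proof of \thmref{multiplicative_upper}, to establish
\[
H(t,m):\quad g(t,m) \le B + (c+1)\sum_{j=m}^{n-1}\frac{1}{j},
\]
where $B = O(c+1)$ is a base constant. The base cases are handled as before. At $t=t_0$ all cups are below $c+1$, so $g(t_0,m) \le c+2$. At $m=n$, $g(t,n) = g(t_0,n) \le c+2$ because total water is nonnegative-monotone and nothing is removed below height $1$. A quick check: the paper's argument uses that the total is unchanged from $t_0$ on, since the adversary adds $1$ and the player removes $1$. So $B = c+2$ works. The inductive step is immediate: in the first case of the lemma use $H(t-1,m)$, and in the second case use $H(t,m+1)$, adding the harmonic term $(c+1)/m$. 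Setting $m=1$ gives a backlog of at most
\[
(c+1)H(n-1) + c + 2 \le (c+1)\ln n + O(c+1).
\]
The natural logarithm matches the paper's convention $\log = \ln$.

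The main obstacle is the boundary bookkeeping, not the recursion. I must ensure that the $-c$ loss appears only once per level $m$, since it is divided by $m$ and so sums to $c\,H(n)$ rather than something larger. I also need to handle the window start $t_0$ carefully: the threshold must be $c+1$ rather than $c$, so that a selected cup always has height at least $1$ and a full unit is removed. If the selected cup could have height below $1$, conservation of water on $[t_0,t_1]$ would fail and the $m=n$ base case would break. I would also verify that the Case 2 inequality needs only that the $m$ fullest cups at time $t$ and cup $i$ are distinct cups present at the intermediate step of $t-1$, which holds as in the multiplicative proof.
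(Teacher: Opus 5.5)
Your proposal is correct and follows the paper's proof essentially step for step. It uses the same window starting at the last time all cups are below $c+1$ (so every removal is a full unit and total water is conserved), the same two-case lemma giving $g(t,m)\le g(t-1,m+1)+\frac{c+1}{m}$, and the same induction with base constant $c+2$. One small slip: in the second case of the inductive step you need $H(t-1,m+1)$, not $H(t,m+1)$, because the lemma bounds $g(t,m)$ by $g(t-1,m+1)$; this hypothesis is still available under induction on $t-m$.
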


It suffices to consider only the times when the maximum cup height exceeds $c + 1$. To bound the backlog at an arbitrary time $t_1$, we let $t_0$ denote the last time prior to $t_1$ when all cups have height less than $c+1$. It is sufficient to show that in the interval $[t_0, t_1]$, the heights of all cups remain bounded by $(c+1)\log n + O(c+1)$; below, we exclusively consider the interval $[t_0, t_1]$.
We may assume that $g_{d_1(t)}(t) \ge c+1$ for all $t \ge t_0$ in our interval $[t_0, t_1]$. From $t_0$ onward, the total amount of water in the system does not decrease until $t_1$, since the player only selects cups whose heights exceed $1$.

\begin{lemma}
    For all $t \geq t_0+1, 1\leq m\leq n-1$, we have either
    \[g(t, m) \leq g(t-1, m)\]
    or
    \[g(t, m) \leq g(t-1, m+1) + \frac{c+1}{m}.\]
\label{lem:additive_lem}
\end{lemma}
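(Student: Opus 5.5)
The plan is to mirror the proof of \lemref{multiplicative_lem} exactly. The only change is in how the selected cup is compared with the fullest cup: now it is additively, not multiplicatively, close. Let $i$ be the cup chosen by the player at time $t-1$. We split into two cases according to whether cup $i$ lies among the $m$ fullest cups at time $t$.

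\emph{Case 1: cup $i$ is among the $m$ fullest cups at time $t$.} The argument is verbatim that of Case 1 in \lemref{multiplicative_lem}. Cup $i$ had exactly $1$ unit removed, since $t-1\ge t_0$ and cup heights exceed $1$. It is still in the top $m$, so it was in the top $m$ before removal, and the top-$m$ set is unchanged. Hence $f(t,m)=g(t-1,m)-1/m$. The adversary adds at most $1$ unit, so $g(t,m)\le f(t,m)+1/m$. Together these give $g(t,m)\le g(t-1,m)$.

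\emph{Case 2: cup $i$ is not among the $m$ fullest cups at time $t$.} None of the $m$ fullest cups at time $t$ was touched by the player at time $t-1$, so their heights equal their intermediate heights $g(\cdot)(t-1)$. Cup $i$ is distinct from them. Therefore
\[
m f(t,m)+g_i(t-1)\le (m+1)\,g(t-1,m+1).
\]
Next we use the additive-error greedy rule. Cup $i$ maximized the estimate $g'$, so
\[
g_i(t-1)+c\ \ge\ g'_i(t-1)\ \ge\ g'_{d_1(t-1)}(t-1)\ \ge\ g_{d_1(t-1)}(t-1)\ \ge\ g(t-1,m+1).
\]
Hence $g_i(t-1)\ge g(t-1,m+1)-c$. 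Substituting this gives
\[
m f(t,m)\le (m+1)g(t-1,m+1)-g(t-1,m+1)+c=m\,g(t-1,m+1)+c,
\]
that is, $f(t,m)\le g(t-1,m+1)+c/m$. Finally, the adversary's injection raises the average of the top $m$ by at most $1/m$. This yields
\[
g(t,m)\le f(t,m)+\frac1m\le g(t-1,m+1)+\frac{c+1}{m}.
\]

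The main point needing care is the chain of inequalities linking the selected cup's true height to the average of the top $m+1$. We must use the lower bound $g'\ge g$ for the fullest cup and the upper bound $g'\le g+c$ for the selected cup. We must also use that the maximum dominates the average. The rest is bookkeeping identical to \lemref{multiplicative_lem}. We also note that the top-$m$ set after removal can be compared with the top-$(m+1)$ set at the intermediate step using only sums, so ties cause no issues.
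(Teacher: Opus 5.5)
Your proof is correct and follows the paper's argument essentially verbatim. Case 1 is deferred to the multiplicative lemma, and Case 2 compares the top-$m$ sum at time $t$ plus $g_i(t-1)$ against the top-$(m+1)$ sum at the intermediate step, then uses $g_i(t-1)\ge g_{d_1(t-1)}(t-1)-c\ge g(t-1,m+1)-c$ together with the $1/m$ injection bound; your explicit chain through the estimates $g'$ (and your indexing of the selected cup at time $t-1$, where the paper's text says $t$) simply spells out what the paper asserts.
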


\begin{proof}
    We denote the index of the cup selected by the player at time $t$ by $i$.

    \paragraph{Case 1: \boldmath Cup $i$ is among the $m$ fullest cups at time $t$}

    This case follows identically to that of Case 1 in Theorem $\ref{thm:multiplicative_upper}$.

    \paragraph{Case 2: \boldmath Cup $i$ is not among the $m$ fullest cups at time $t$}

    Since cup $i$ was selected by the player, we have that $g_i(t-1) \geq g_{d_1(t-1)}(t-1) - c$, i.e., it was within $c$ additively of the fullest cup.

    Then
    \begin{align*}
        f(t, m) &= \frac{\sum_{i=1}^m f_{c_i(t)}(t)}{m} \\
        &= \frac{\left( \sum_{i=1}^m f_{c_i}(t) \right) + g_i(t-1) - g_i(t-1)}{m} \\
        &\leq \frac{g(t-1, m+1)(m+1) - g_i(t-1)}{m} \\
        &\leq \frac{g(t-1, m+1)(m+1) - g(t-1, m+1) + c}{m}\\
        &= g(t-1, m+1) + \frac{c}{m}.
    \end{align*}
    
    The third line follows since the $y$ fullest cups at time $t$ were all present with their exact heights at the intermediate step at time $t-1$ (since none of them were selected by the player), and so is cup $i$ with height $g_i(t-1)$, which we know to be distinct from the $y$ fullest cups at time $t$ by the case assumption. Therefore, the sum of the heights of the $y+1$ fullest cups at the intermediate step at time $t-1$ is at least $f(t, y)y + g_i(t-1)$.

    The fourth line follows from the inequality 
    \begin{align*}
    g_i(t-1) &\geq g_{d_1(t-1)}(t-1) - c \\
    &\geq g(t-1,y+1)-c.
    \end{align*}
    We can conclude that 
    \begin{align*}
    g(t,m) &\leq f(t,m)+\frac{1}{m} \\ &
    \leq g(t-1,m+1)+\frac{c+1}{m}.
    \end{align*}
\end{proof}

\begin{proof}[Proof of Theorem \ref{thm:additive_upper}]
We will prove the theorem by showing the following inductive hypothesis:
\[
g(t,m)\leq c+2+\sum_{i=m}^{n-1}\frac{c+1}{i}.
\]
For the base case, we know that  for all  $1 \leq m \leq n$,
\begin{align*}
g(t_0, m) &\leq g(t_0 - 1, m) + 1 \\
&< c + 2 
\end{align*}
since time $t = t_0$ is defined to be the first moment when the maximum water height ever exceeds $c + 1$. Also, 
\[
g(t, n) \leq c + 2 \text{ for all } t \geq t_0,
\]
because the total amount of water does not change after time $t_0$.

Now, we will perform induction on $t - m$ to prove the hypothesis. Fix a constant $t > t_0$ and $m < n$; we aim to prove the inductive hypothesis for this pair using the inductive assumption.

From Lemma~\ref{lem:additive_lem}, we know that
\[
g(t, m) \leq g(t - 1, m) \quad \text{or} \quad g(t, m) \leq g(t - 1, m + 1) + \frac{c + 1}{m}.
\]

\paragraph{\boldmath Case $1$: $g(t, m) \leq g(t-1, m)$.} We apply the inductive hypothesis to $(t - 1, m)$:
\[
g(t, m) \leq g(t - 1, m) \leq c + 2 + \sum_{i = m}^{n - 1} \frac{c + 1}{i}.
\]

\paragraph{\boldmath Case $2$: $g(t, m) \leq g(t-1, m+1) + (c+1)/m$.} We apply the inductive hypothesis to $(t - 1, m + 1)$:
\[
\begin{aligned}
g(t, m) &\leq g(t - 1, m + 1) + \frac{c + 1}{m} \\
&\leq c + 2 + \sum_{i = m + 1}^{n - 1} \frac{c + 1}{i} + \frac{c + 1}{m} \\
&= c + 2 + \sum_{i = m}^{n - 1} \frac{c + 1}{i}.
\end{aligned}
\]

In both cases, the desired bound holds. Therefore, we have proven the inductive hypothesis for all $t \geq t_0$ and $1 \leq m \leq n$.
This implies that 
\begin{align*}
    g(t, 1) &\leq c + 2 + (c+1) H_{n-1} \\
    &\leq 2c + 3 + (c + 1)\ln n \\
    &=(c+1)\ln n+O(c+1).
\end{align*}
\end{proof}

We also show a matching lower bound, demonstrating that our upper bound is asymptotically optimal. In fact, we show that the exact optimal backlog is within $O(c+1)$ of $(c+1) \ln n$. Note that this matches the perfect information cup game, where the exact optimal backlog is within $O(1)$ of $\ln n$ (i.e., for $c = 0$).

\begin{theorem}
    The adversary can force backlog $(c+1)\ln n-O(c+1)$.
\end{theorem}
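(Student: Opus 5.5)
We construct an adversary that mimics the classical Dietz--Sleator / Adler et al.\ lower bound for the cup game, but uses the additive slack $c$ to make each ``elimination phase'' gain roughly $(c+1)/x$ in average height instead of $1/x$. As in the multiplicative lower bound, we may let the adversary steer \textbf{greedy}: in each step the player removes water from whichever cup the adversary prefers among those whose height is at least $g_{d_1(t)}(t) - c$. The same strategy therefore works against any deterministic player once we have a phase in which every candidate cup is within $c$ of the fullest.

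\textbf{Setup.} First bring $n-1$ cups to a common height close to some constant. Pour water evenly into all active cups while the player drains a designated sacrificial cup, which the adversary keeps at height within $c$ of the top. This costs only $O(c+1)$ in the final bound.

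\textbf{Phase with $x$ active cups, all at common height $h$.} The adversary pours evenly into $x-1$ of them and leaves one cup, the victim, alone. While the victim's height is at least (max height)$-c$, the adversary directs the player to remove from the victim. Each step adds $1/(x-1)$ to each of the other $x-1$ cups and removes $1$ from the victim. The phase lasts until the others exceed the victim by more than $c$. Since the victim loses about $1$ per step and the others gain $1/(x-1)$, this takes about $(c+1)\cdot\frac{x-1}{x}$ steps. Total water in the active set is conserved, since the player removes $1$ and the adversary adds $1$, until the victim is nearly empty. The victim is therefore discarded holding about $h-(c+1)+O(1/x)$. We then check that $x h \approx (x-1)h' + (h - (c+1))$, which gives $h' \approx h + (c+1)/(x-1)$ up to $O(1/x^2)$ error terms. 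Equalization is automatic because the others are filled evenly. The main obstacle is making this exact: we need integrality of the step count, we need the victim to stay nonnegative (which requires $h \ge c+1$, true after the first $O(1)$ phases or after a warm-up), and we must handle the one-step overshoot. I would first write the recurrence exactly in the form $(x-1)h(x-1) \ge x h(x) - h(x) + (c+1) - 1/(x-1)$, mirroring the display $h(x+1)(x+1)+1<(h(x)x+1)(1+\frac{1}{cx})$ from the multiplicative proof, and verify that the additive errors sum to $O(c+1)$.

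\textbf{Telescoping.} Summing $h(x-1) - h(x) \ge \frac{c+1}{x-1} - O\!\left(\frac{c+1}{x^2}\right)$ from $x=n-1$ down to $x=2$ gives $h(1) \ge (c+1)(H_{n-2}) - O(c+1) = (c+1)\ln n - O(c+1)$. The sum of the error terms $\sum 1/x^2$ converges, so it contributes only $O(c+1)$. This completes the proof.
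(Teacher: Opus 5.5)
\textbf{There is a genuine gap, and it sits in the step you defer as a technicality.} With whole, even pours into the $x-1$ non-victims, the player can be sent to the victim at step $j$ of a phase only if the post-pour gap $(j-1)+j/(x-1)$ is at most $c$. So the victim is drained exactly $r=\lfloor (c+1)(x-1)/x\rfloor$ times, and the survivors end at $h'=h+r/(x-1)$. Your quantity $(c+1)(x-1)/x$ is the right \emph{net} loss. However, the floor throws away up to a full unit of water per phase. That is $\Theta(1/x)$ in height, not $O(1/x^2)$, and summed over the phases it is $\Theta(\ln n)$, which changes the leading constant. Concretely, take $c=1$. After one removal the victim sits at $h-1$, and the next even pour puts the others at $h+2/(x-1)$, a gap larger than $1$. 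So $r=1$ and every phase gains only $1/(x-1)$. Your adversary then forces only about $\ln n$ rather than $2\ln n$ (in general about $\lceil c\rceil \ln n$). The recurrence $(x-1)h(x-1)\ge xh(x)-h(x)+(c+1)-1/(x-1)$ that you intend to verify is false for this strategy.

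\textbf{The missing idea, which the paper uses, is a split pour on the last step of each phase.} Suppose one more even pour of $1/(x-1)$ would push the current gap $G$ above $c$. The adversary instead gives each non-victim $\delta=(c+1-G)/x$ and gives the victim the remaining $1-(x-1)\delta$. Then the post-pour gap is exactly $c$, and the player can be directed to the victim one last time. This is feasible, i.e.\ $0\le\delta<1/(x-1)$, for two reasons. First, $G+1/(x-1)>c$ by the stopping rule. Second, $G\le c+1$: either $G=0$ at the start of the phase, or the previous post-pour gap was at most $c$ and the victim then lost one unit. Now the victim ends at exactly $h'-(c+1)$, so conservation gives $xh=(x-1)h'+h'-(c+1)$, i.e.\ $h'=h+(c+1)/x$ with no error term (modulo the same nonnegativity caveat you raise). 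Telescoping from $n$ empty cups gives $h(1)=(c+1)(H_n-1)=(c+1)\ln n-O(c+1)$. Without this fractional step, no bookkeeping of integrality or overshoot errors recovers the constant $c+1$.
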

\begin{proof}
We will define a suitable strategy for the opponent to induce a backlog of $(c + 1)\ln n-O(c+1)$.  
First, we may assume that the player always picks a cup of the adversary's choosing among those containing more than $g_{d_1(t)}(t) - c$ units of water. This assumption (which is useful for ease of exposition) is justified as the adversary can provide estimates for these cups which are all equal, and the greedy algorithm is specified to break ties arbitrarily.
The following is an algorithm for the adversary:

\begin{enumerate}
    \item Set $x = n$. Note that in the first step, we elaborate on what the adversary does for a fixed $x$, and in the second step, $x$ is decreased. Assume that cups $1$ through $x$ contain the same amount of water after the player's turn. The adversary adds a unit of water equally to cups $1$ through $x - 1$ while the condition 
    \[
    f_1(t) + \frac{1}{x - 1} \leq f_x(t) + c
    \]
    holds. Under this inequality, we can assume the player removes water from cup $x$.  
    If the inequality does not hold, the adversary will perform the following process and go to next step: the adversary adds the same amount of water to cups $1$ through $x - 1$, and add the remaining water to cup $x$ such that the equation
    \[
    g_1(t) = g_x(t) + c
    \]
    is satisfied. This is possible due to the following two cases:

    \paragraph{\boldmath Case 1: $x$ does not decrease from $x + 1$ during time $t - 1$.}
    Then by the case assumption, we have that
    \[f_1(t-1) + \frac{1}{x-1} \leq f_x(t-1) + c.\]
    We have
    \begin{align*}
    f_1(t) &= f_1(t - 1) + \frac{1}{x - 1} \\
    &\leq f_x(t - 1) + c \\
    &\leq f_x(t) + c + 1.
    \end{align*}
    The second line follows by the case assumption.
    Thus, 
    if the adversary gives the full unit of water to cup $x$, then $g_1(t) \leq g_x(t) + c$.

    \paragraph{\boldmath Case 2: $x$ decreases from $x + 1$ during time $t - 1$.} This case assumption implies that all $x$ cups have equal height, as the last cup to have water removed was cup $x+1$. Thus,
    \[
    \begin{aligned}
    f_1(t) &= f_x(t) \\\
    &< f_x(t) + c + 1.
    \end{aligned}
    \]

    \item Decrement $x$ by $1$. We can assume the player removes water from cup $x$ at time $t$, as we have shown it is within height $c$ of the other cups. Repeat Step 1 until $x = 1$. We can also see that the remaining $x - 1$ cups maintain the invariant that they contain the same amount of water.
\end{enumerate}

We now show that this algorithm achieves a backlog of $(c+1)\ln n - O(c+1)$.

Let $t(x)$ denote the time when $x$ decreases from $x+1$ for $1\leq x\leq n-1$. Let $h(x)=g_1(t(x))=f_1(t(x)+1)$ denote the amount of water in each of the $x$ cups immediately after $x$ decreases from $x+1$. Also, we define $h(n)=0$.

We know that
$$
h(x)= g_{x+1}(t(x))+c\text{ for all }1\leq x\leq n-1
$$
by the design of the algorithm. Also, since the total amount of water in the cups 1 to $x+1$ does not decrease at any time between $t(x+1)\leq t<t(x)$, we observe that 
\begin{align*}
h(x+1)\cdot(x+1) &\leq f_1(t(x))\cdot x+f_{x+1}(t(x)) \\
&=h(x)\cdot x+g_{t+1}(t(x))-1 \\
&= h(x)(x+1)-c-1.
\end{align*}
(for the case of $x = n-1$, we utilize $h(n)=0$).
Equivalently, we have
\[
h(x)\geq h(x+1)+\frac{c+1}{x+1}.
\]
From the inequality above,
\begin{align*}
h(1) &\geq h(n)+\sum_{i=1}^{n-1}\frac{c+1}{i+1} \\
&=(c+1)\ln n-O(c+1). \\
\end{align*}
This completes the proof, as $h(1)$ is equal to the amount of water remaining in cup $1$ at the end of the algorithm.
\end{proof}

We briefly note that the above lower bound applies, essentially as is, to any deterministic algorithm. In step $1$, the adversary will fill the $x-1$ of the cups that it can see will be least frequently chosen by the player, allowing the most frequently chosen cup to decay to height $c$ smaller than the remaining $x-1$. 

\section{Greedy is not Optimal for Bamboo Trimming}

\label{sec:lower_bound}

In this section, we disprove the previously conjectured upper bound of $2$ \cite{DEmidioStNa19} for the \textbf{greedy} strategy on the bamboo garden trimming problem. Recall that in the bamboo trimming problem, the growth rates of each bamboo are fixed by the adversary ahead of time, and the player has resource augmentation---the player's selected bamboo is reduced to height $0$. The player's goal is to minimize the backlog---the supremum of the height of the tallest bamboo over all time.

\paragraph{Warm up.}
We begin by presenting a simple example of a bamboo instance, which achieves backlog greater than $2$. This serves as a basic disproof of  is more digestible than the full lower bound presented below.
    Take $2000$ bamboos, where the first bamboo has growth rate $0.5+\epsilon$, the second bamboo has growth rate $0.3+\epsilon$, and the remaining $1998$ bamboos all have growth rates of $\frac{0.2-2\epsilon}{1998}$ (which is about $0.0001$). We set $\varepsilon = 10^{-4}$.
    This specific choice of fill rates against \textbf{greedy} achieves a backlog of $2 + 4 \varepsilon = 2.0004$. 
    We observe: 
    \begin{itemize}
        \item Until time step $3000$, the first and second bamboos are cut down in turns, since one of them is always the tallest one (standing at either $0.3001$ or $0.5001$).
        \item At time step $3001$, all $1998$ slow-growing bamboos surpass the second bamboo, and thus one of them will be cut down in place of the second one, resulting in the second bamboo only being cut once it reaches the height of $0.6002$.
        \item At time step $5001$, the slow-growing bamboos that have not been cut down yet will reach the height of the first bamboo.
        \item At time steps $6002$, $9003$, $10002$ and $12004$, the slow bamboos reach subsequent multiples of the fast bamboos. 
        The player still must cut down the first bamboo at least every $3$ time steps to stay below $2$. However, the player also needs to cut down the second bamboo at least every $6$ time steps. This situation is still redeemable, but the \textbf{greedy} algorithm will focus increasingly more on those slow, low-priority bamboos instead of the faster-growing ones. 
        \item At time step $12009$, the fastest bamboo is cut down for the last time before forcing backlog 2. After this step, $105$ slow bamboos have already grown beyond $1.201$, which will make them appear to be a priority for the next few steps when the fast bamboos are growing unchecked.
        \item At time steps $12010-12012$, the algorithm (non-optimally) chooses to cut down slow growing bamboos of height 1.202-1.204, since they are taller than both of the fast bamboos.
        \item At time step $12013$, the two faster bamboos measure $1.5003$ and $1.5005$ respectively. The algorithm selects the second bamboo (of growth rate $0.3001$). 
        \item  At time step $12014$, the backlog reaches $2.0004$, which is achieved by the bamboo with growth rate $0.5001$.
    \end{itemize}
    As illustrated, the adversary manages to occupy the \textbf{greedy} algorithm with cutting down slower bamboos that are $O(\epsilon)$ taller over fast ones for long enough that multiple faster bamboos reach a dangerous height (near $2$), at which point it is too late to save the situation; at least one of the faster bamboo will achieve height greater than $2$.
    \john{Could elaborate a little more about how this warm-up example achieves backlog more than $2$. The bamboo with growth rate $1/2+\epsilon$ is important.}
    
    We now show that the adversary can do even better. Below, we present a  more sophisticated lower bound achieving a backlog of $2.076$, that uses the same general technique.

\begin{theorem}
The \textbf{greedy} algorithm achieves worst-case backlog of no less than 2.076. Thus, it is not worst-case optimal.
\end{theorem}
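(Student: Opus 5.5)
The lower bound is existential, so the proof will be an explicit bamboo instance together with a simulation of \textbf{greedy} on it, in the spirit of the warm-up. The warm-up lets a large pool of slow bamboos, each slightly taller than the fast ones, absorb the player's cuts while the fast bamboos grow unchecked. I would sharpen this into a parameterized family: a few fast bamboos with rates $r_1 > r_2 > \dots > r_k$ summing to $1-\delta$, with $r_1$ slightly above $1/2$, and several groups of slow bamboos. Group $j$ consists of $N_j$ bamboos of common tiny rate $s_j$, where $\sum_j N_j s_j = \delta$. The slow groups are tuned so that their common height first reaches the heights of the fast bamboos exactly when those fast bamboos have been pushed into long cycles.

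The key steps, in order:
\begin{enumerate}
\item \emph{Phase analysis.} Before the slow bamboos matter, \textbf{greedy} alternates among the fast bamboos with a short periodic schedule. I would compute this schedule and the heights at which each fast bamboo is cut, as in the first bullet of the warm-up.
\item \emph{Crossing times.} Once a slow group's height exceeds a fast bamboo's cutting height, \textbf{greedy} cuts slow bamboos instead, so that fast bamboo's cutting height rises to the next multiple of its rate. I would track the sequence of crossing times $T$, at which $T s_j$ meets a multiple $a\, r_i$. These crossings ratchet the fast bamboos up simultaneously. The group sizes $N_j$ must be large enough that a whole group cannot be cleared before the next crossing.
\item \emph{Endgame.} Arrange that at some time the fast bamboos all sit near $1.5$ while more than a few slow bamboos are marginally taller. \textbf{Greedy} then spends several consecutive steps on slow bamboos, and afterwards can save only one fast bamboo per step, so the rate-$r_1$ bamboo overshoots. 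Its final height is $(q+m)r_1$, where $q r_1$ is its height when the slow bamboos stop delaying it and $m$ counts the steps it must still wait.
\end{enumerate}

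To get the concrete constant $2.076$, I would choose the rates and group sizes numerically (for instance $r_1 \approx 0.519$ and $4$ waiting steps, giving $4 \times 0.519 = 2.076$). I would then verify the whole trajectory by exact rational simulation, possibly with a computer, reporting the critical times as in the warm-up. Ties will be made strict by $\varepsilon$-perturbations so that \textbf{greedy}'s choices are forced.

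The main obstacle is step 2. Cutting slow bamboos resets them, and the remaining uncut slow bamboos must keep exceeding the fast ones for long enough. This requires controlling how many slow bamboos remain above each threshold at each step. A counting argument, with $N_j$ large compared to the length of the critical window, plus bounding how the fast bamboos' phases drift, should handle it. Making the final numbers hit $2.076$ exactly rather than only $2+O(\varepsilon)$ will depend on getting this bookkeeping right.
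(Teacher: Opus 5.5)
\textbf{There is a genuine gap: the proposal never fixes an instance.} Your template is the paper's: fix one bamboo instance and certify \textbf{greedy}'s trajectory on it by exact simulation. But for an existential numerical claim like this one, the witness \emph{is} the proof. ``Choose the rates and group sizes numerically'' and ``verify with a computer'' describe how to search for a counterexample; they do not show that one reaching $2.076$ exists. You also leave your stated main obstacle unresolved (controlling how many slow bamboos stay above each threshold across a parameterized family). You concede that reaching $2.076$ rather than $2+O(\varepsilon)$ ``will depend on getting this bookkeeping right.''

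Your one concrete hint, $r_1\approx 0.519$ with four waiting steps, is an unverified extrapolation of the warm-up. The warm-up only gives $2+4\varepsilon$ with $\varepsilon=10^{-4}$. Moving $r_1$ a constant distance above $1/2$ is not a small perturbation of that example, and nothing in your plan shows the required alignment can be realized. Your endgame also does not match the warm-up's mechanism. There the slow bamboos sit near $1.2$ and absorb only the steps where the fast bamboo is at $r_1$ and $2r_1$. The step at $3r_1\approx 1.5$ is covered by the \emph{second fast} bamboo at $1.5005$, not by slow bamboos marginally above the fast ones.

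The paper's witness works in a different regime. It uses $2702$ bamboos: three fast ones with rates $0.415+\varepsilon$, $0.165-2\varepsilon$, $0.185-2\varepsilon$, and a single homogeneous group of $2699$ slow bamboos of rate $(0.235+3\varepsilon)/2699$, with $\varepsilon=2\cdot 10^{-4}$. The bound is $2.076=5r_1$ with $r_1=0.4152<1/2$, not $4r_1$ with $r_1>1/2$. The sequence after bamboo 1's last cut at step $15087$ is:
\begin{enumerate}
\item Sixteen slow bamboos above $1.3168=8(0.165-2\varepsilon)$ absorb the intervening steps.
\item When bamboo 1 stands at $3r_1=1.2456$, \textbf{greedy} cuts the $0.1646$-rate bamboo at $1.4814$.
\item When bamboo 1 stands at $4r_1=1.6608$, \textbf{greedy} cuts the $0.1846$-rate bamboo at $1.6614$.
\item Bamboo 1 then reaches $5r_1=2.076$.
\end{enumerate}
This two-stage slingshot is what buys the extra neglected step and a constant margin above $2$. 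Two secondary fast bamboos are staggered so that each overtakes a consecutive multiple of $r_1$ in consecutive steps. Your plan does not contain this idea.

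Because the instance is finite and deterministic, the paper needs no counting lemma and no multiple slow groups. The crossing times come from one slow rate passing successive multiples of the three fast rates, and the simulation checks everything directly. To turn your plan into a proof, you must commit to specific rates and counts and exhibit (or certify) the trajectory up to the step where the bound is hit.
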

    
\begin{proof}
   Take $2702$ bamboos. Set $3$ faster bamboos to have respective growth rates of $0.415+\epsilon$, $0.165-2\epsilon$ and $0.185-2\epsilon$, with the remaining $2699$ bamboos with the same small growth rate ($\frac{0.235+3\epsilon}{2699}$). In our calculations, we take epsilon to be $2\cdot10^{-4}$.
   \begin{itemize}
   \item For the first $2$ time steps, the fastest bamboo is cut down.
   \item Then a long period emerges, during which the fastest bamboo is cut down every $2$ time steps, and the other two fast bamboos alternately fill the gaps, being cut down each every $4$ time steps.
   \item At time step $7542$, the slow bamboos surpass the height of $0.6584 = 4\cdot(0.165-2\epsilon)$. From now on, the third bamboo will only be cut down at most every $5$ steps instead of $4$.
   \item Similarly, at time steps $8459, 9512$...etc., the slow bamboos reach consequent multiples of the faster bamboos, thus occupying increasingly more of the \textbf{greedy} algorithm's time.
   \item At time step $14269$, a cohort of slow bamboos reaches the point where their height is above $1.2456$, thrice the growth rate of the first bamboo, which means from now \textbf{greedy} will only be prioritizing the first bamboo every four steps.
   \item At time step $15087$, the fastest bamboo is cut down for the last time before reaching our desired backlog. At this point, there remains a group of $16$ slow-growing bamboos that have reached a height over $8\cdot(0.165-2\epsilon)$ $(1.3168)$. 
   This small group of slow bamboo are numerous enough to occupy the \textbf{greedy} algorithm's attention, while allowing the faster bamboo to achieve heights dangerously close to $2$.
   \item At time step $15090$, we neglected the first bamboo for $5$ consecutive steps, and now it stands at $1.2456$, the second bamboo stands at $1.4768$, while the third bamboo stands at $1.4814$. Hence, \textbf{greedy} cuts down the third bamboo.
   \item At time step $15091$, the first bamboo stands at $1.6608$ ($4\cdot({0.4152+\epsilon})$), and the second at $1.6614$ ($9\cdot({0.185}-2\epsilon)$), so the \textbf{greedy} strategy must cut down the second bamboo. 
   \item At time step $15092$, the bamboo with growth rate $0.415+\epsilon$ reaches the height of $2.075+5\epsilon = 2.076$, which brings us to our claimed lower bound.
   \item Notably, the backlog reaches $2.076$ two more times in the near future: at time steps $15101$ and $15110$, following a similar mechanism as the one outlined above.
   \end{itemize}
\end{proof}
\subsection{Discussion}

The two examples presented above both contain a few fast bamboos with distinct rates whose growth rates occupy most of the allowance, and many very slow bamboos with the same rate. A key aspect of these bamboo instances is that eventually, the slower bamboos must grow taller than the penultimate multiple of the growth rate one of the fast bamboos prior to $2$.
This has the effect of potentially setting up a ``slingshot,'' whereby $2$ (or more) fast bamboo achieve heights close to $2$. However, the slower one of them is slightly taller, enabling the faster one to jump forward once more. This was seen in our warm up, when $2$ bamboos, with growth rates $0.5 + \varepsilon$ and $0.3 + \varepsilon$ achieved height $1.5 + 3\varepsilon$ and $1.5 + 5\varepsilon$ respectively. The fastest bamboo was able to then slingshot to height of $2.0 + 4\varepsilon$.

This setup results in the \textbf{greedy} algorithm prioritizing the slow bamboo over the fast ones, which pose a much more immediate threat to crossing the height $2$ mark. These types of `steep' bamboo instances, consisting of a handful of fast bamboos and many identical, very slow bamboos, tend, in our experiments, to get backlog of more than $2$ fairly often---we were able to find over $20$ distinct distributions. We generated these by picking the growth rates of the fast bamboo uniformly over a bounded set of integers, and combining them with some identical slow bamboo (and normalizing the total growth rate to $1$). Specifically, most of our counter-examples resulted from either generating two bamboos with uniformly random integer growth rates in $[2500, 5000]$ and $[1250, 2500]$, or three random integers in $[800, 2500]$, $[1200, 2500]$, and $[2500, 5000]$, and giving the remaining bamboos growth rate $1$. To obtain our strongest lower bound, we additionally slightly perturbed the heights by greedily adding slight perturbations of the form $\varepsilon \sim \mathcal{N}(0, 0.0001)$ to the fast-growing bamboos before normalizing the result and accepting the new growth distribution if the backlog increased.

Interestingly, all of our counter-examples have between $1200$ and $3000$ bamboos. Examining similar steep distributions on much smaller or larger distributions did not yield examples with backlog greater than $2$. This is likely because for smaller number of bamboos, the slower bamboos will not be numerous enough, and for larger number of bamboos, it is more difficult to balance the relative heights. This finding explains why previous work conjecturing the worst-case backlog of $2$ for \textbf{greedy}~\cite{DEmidioStNa19} did not find any counter-examples despite their exhaustive empirical search---the experiments were conducted for relatively small bamboo instances ($n \leq 35$). 

We also empirically evaluated many other types of bamboo instances, such as distributions of growth rates that are exponentially-decaying distributions, polynomial distributions, various random distributions, and harmonic series, although none of these instances yielded backlog greater than $2$. 
Interestingly, none of our simulations at all resulted in a backlog greater than $2$ more than $3$ distinct times. Thus it remains open whether such a lower bound extends periodically.








\subsection{A Hybrid Algorithm}

We present a simple algorithm, the \textbf{Deadline-Driven Greedy Hybrid}, (see Figure \ref{fig:OurAlg}) that achieves exactly optimal backlog in the bamboo game and fixed-rate cup game,
while achieving backlog within an additive $O(1)$ in both the variable-rate cup game and cup-flushing game.

\begin{figure}[t!]
\begin{tcolorbox}[standard jigsaw, opacityback=0]

\noindent{}\hspace{-3pt}{\bf Deadline-Driven/Greedy Hybrid}

\noindent At every time $t$, the player executes the following:
\begin{itemize}
\item  If $\max_i |b_i|_t < 2$, then execute the \textbf{Deadline-Driven Strategy}, assuming the fill rates are constant.
\item  Else, execute the \textbf{greedy} algorithm.
\end{itemize}

\end{tcolorbox}
\vspace{-8pt}\caption{\textbf{Deadline-Driven/Greedy Hybrid} algorithm.}  
\label{fig:OurAlg}\vspace{-0pt}
\end{figure}

One way to think about this algorithm is that it is simply the \textbf{Deadline-Driven Strategy}, with a backstop of the \textbf{greedy} algorithm in case the \textbf{Deadline-Driven Strategy} fails to maintain a backlog below $2$.

\begin{theorem}
    This algorithm achieves exactly optimal backlog of $2$ for the bamboo trimming problem and fixed-rate cup game. It also achieves backlog less than $H(n) + 2$ in the variable-rate cup game, which is within $1$ of the exact optimal worst-case backlog.
\end{theorem}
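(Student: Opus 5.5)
We prove the two claims separately, treating the hybrid as \textbf{Deadline-Driven} with a \textbf{greedy} backstop.

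\emph{Fixed-rate settings.} Here we invoke the known guarantee, cited in the introduction (\cite{Kuszmaul22, LiuLa73, BaruahCoPl96, LitmanMo05}), that the \textbf{Deadline-Driven Strategy} keeps every cup strictly below height $2$ in both the bamboo trimming problem and the fixed-rate cup game. The key step is an induction on $t$: the hybrid's decisions coincide with those of pure \textbf{Deadline-Driven} at every step. At time $0$ all cups are empty, so the first branch runs. Suppose that up to time $t$ the hybrid has made exactly the choices Deadline-Driven would make. Then the configuration at time $t$ is the one Deadline-Driven would reach, so by its guarantee $\max_i |b_i|_t<2$. Hence the hybrid runs Deadline-Driven again at time $t$. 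Since the rates really are constant, the ``assuming the fill rates are constant'' clause is exact. By induction the two trajectories agree forever, so the backlog is at most $2$. Optimality then follows from the matching lower bound of $2$ for any algorithm quoted in Figure~\ref{fig:bounds}. One subtlety needs care: the theorem must be stated with the right strict/non-strict inequality at height $2$. If Deadline-Driven only guarantees $\le 2$, the switching test ``$<2$'' could fire exactly at height $2$. I would check that the guarantee is strict before the player's move, or else adjust the threshold. Either way the backlog bound of $2$ survives.

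\emph{Variable-rate cup game.} Here I would reuse the Dietz--Sleator / Adler et al.\ invariant argument in the form used for Theorem~\ref{thm:additive_upper}. Fix a time $t_1$ and let $t_0$ be the last time before $t_1$ at which all cups have height below $2$. On the interval $(t_0,t_1]$ the maximum stays at least $2$ after each injection, so the player runs pure \textbf{greedy} throughout. Greedy also only touches cups of height above $1$, so the total water never decreases on this interval. I would then reprove the analogue of Lemma~\ref{lem:additive_lem} with $c=0$: for each $m$, either $g(t,m)\le g(t-1,m)$ or $g(t,m)\le g(t-1,m+1)+1/m$. This is Case 1/Case 2 of that lemma verbatim, using that the chosen cup is the fullest.

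The base cases become $g(t_0,m)<2+1=3$ and $g(t,n)\le g(t_0,n)<3$. The inductive hypothesis is $g(t,m)\le 3+\sum_{i=m}^{n-1}1/i$, proved by induction on $t-m$ exactly as before. Setting $m=1$ gives $g(t,1)\le 2+H(n)$, and a little sharpening of the base case yields the strict bound $<H(n)+2$. Comparing with the exact optimum $H(n)+1$ gives the ``within $1$'' claim.

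The main obstacle is the base case at the switching time $t_0$. There, Deadline-Driven, not greedy, made the previous move, so I cannot assume any greedy-structured invariant on the configuration at $t_0$. The only control is that all heights are below $2$, and one injection adds at most $1$ in total. This is exactly why the bound degrades from $H(n)+1$ to $H(n)+2$. I would make sure that a single unit of injection cannot push the average $g(t_0,m)$ beyond $2+1/m$, or at worst beyond $3$, which the invariant above absorbs.
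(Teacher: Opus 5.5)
Your fixed-rate argument matches the paper's: Deadline-Driven keeps every height strictly below $2$, so the hybrid never leaves its first branch. For the variable-rate game your route differs. The paper only cites Dietz--Sleator and Adler et al.\ for the fact that greedy started from heights below $2$ stays below $H(n)+2$, while you reprove it with the $c=0$ case of Lemma~\ref{lem:additive_lem}. That is a reasonable, more self-contained approach, but your bookkeeping does not give the stated bound. With your invariant $g(t,m)\le 3+\sum_{i=m}^{n-1}1/i$, setting $m=1$ gives $g(t,1)\le 3+H(n-1)=H(n)+2+(1-1/n)$, not $2+H(n)$. As written you prove a bound about $H(n)+3$. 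That is not less than $H(n)+2$, and it is not within $1$ of the optimum $H(n)+1$. This is not a strict-versus-nonstrict issue: replacing the base case by the uniform bound $3$ costs almost a full unit at $m=1$.

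The fix is to keep the $m$-dependence in the base case, which your last paragraph already gestures at.

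\begin{itemize}
\item Let $s$ be the first step of the final greedy phase, i.e., the step right after the last Deadline-Driven move.
\item Just before the injection at step $s$, every cup is below $2$, and any $m$ cups receive at most one unit in total. Hence $g(s,m)<2+1/m$.
\item From step $s$ onward, greedy removes a full unit from a cup of height at least $2$ at every step. So total water is conserved and $g(t,n)=g(s,n)<2+1/n$.
\item The invariant $g(t,m)<2+\sum_{i=m}^{n}1/i$ therefore holds at $t=s$ and at $m=n$.
\item It propagates through both cases of the $c=0$ lemma for $t\ge s+1$. The lemma needs greedy to have made the move at step $t-1$, so the induction must start at $s$ and cannot reach across the Deadline-Driven step.
\end{itemize}

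At $m=1$ this gives $g(t,1)<2+H(n)$, strictly, which is the claimed bound and the ``within $1$'' statement.
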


\begin{proof}
    In the fixed-rate settings, the \textbf{Deadline-Driven Strategy} will always maintain backlog less than $2$~\cite{Kuszmaul22, LiuLa73, LitmanMo09}. Thus the hybrid algorithm is optimal in those settings. 

    In the variable-rate cup game, the backlog at time $t$ is bounded because the \textbf{greedy} algorithm will intervene as soon as any cup achieves height at least $2$. Denote by $t_1$ the last such time that the \textbf{greedy} algorithm intervenes, and takes over from the \textbf{Deadline-Driven Strategy}, before $t$. Then, assuming without loss of generality that the backlog at time $t$ is at least $2$, we have that $[t_1, t]$ is an execution of the \textbf{greedy} algorithm with initial heights less than $2$. It follows by prior work~\cite{AdlerBeFr03, DietzSl87} that \textbf{greedy} maintains backlog less than $H(n) + 2$, which is within $1$ of the optimal backlog guarantee of $H(n) + 1$.
\end{proof}

\input{acknowledgments}

\bibliographystyle{abbrv}
\bibliography{./all}
\clearpage
\appendix
\end{document}

%% file: packages.tex
\usepackage{amsthm}
\usepackage[T1]{fontenc}
\usepackage{relate}
\usepackage{pgfplots}
\usepackage{pgfplotstable}
\pgfplotsset{compat=1.8}
\usepackage[aboveskip=8pt,belowskip=-4pt]{caption}
\usepackage{subcaption}
\usepackage{xspace} 
\usepackage[margin=1in]{geometry}
\usepackage{textcomp}
\usepackage{balance}  
\usepackage{xcolor}
\usepackage{amsmath}
\usepackage{algorithm}
\usepackage{algorithmicx}
\usepackage{cite}
\usepackage{float}
\usepackage{siunitx}
\usepackage{graphicx}
\usepackage{xspace}
\usepackage{thm-restate}
\usepackage{tcolorbox}
\usepackage{balance}  
\usepackage{booktabs} 

\usepackage{marginnote} 
\usepackage{url}
\usepackage{amssymb}
\usepackage{enumitem}
\graphicspath{{./fig/}}
\usepackage{mathtools}
\usepackage[noend]{algpseudocode}
\newtheorem{theorem}{Theorem}
\newtheorem{lemma}{Lemma}

\usepackage{comment}
\usepackage{color}
\usepackage{hyperref}
\usepackage{cleveref}
\usepackage{tikz}
\usetikzlibrary{fadings}

\usepackage{titling}
\thanksmarkseries{arabic}

%% file: defines.tex
\newcommand{\punt}[1]{}

\makeatletter
\makeatletter
\def\@copyrightspace{\relax}
\makeatother

\newtheorem{fact}{Fact}

\newcommand{\defn}[1]       {{\textit{\textbf{\boldmath #1}}}}
 
\newcommand{\poly}{\mbox{poly}}

\date{}

\newcommand{\namedcomment}[3]{{\sf \scriptsize \color{#2} #1: #3}}

\newif\ifcomments
\commentsfalse

\newcommand{\john}[1]{\ifcomments {\namedcomment{john}{blue}{#1}} \fi{}}

\renewcommand{\epsilon}{\varepsilon}

\newcommand{\secref}[1]         {Section~\ref{sec:#1}}
\newcommand{\seclabel}[1]    {\label{sec:#1}}

\newcommand{\thmref}[1]         {Theorem~\ref{thm:#1}}

\newcommand{\lemref}[1]         {Lemma~\ref{lem:#1}}



%% file: titleauth.tex
\title{Strengths and Limitations of Greedy in Cup Games}
\author{Kalina Jasińska\\University of Cambridge \and John Kuszmaul\\MIT \and Gyudong Lee\\MIT}

%% file: acknowledgments.tex

\subsection*{Acknowledgments}
The authors gratefully acknowledge support from NSF Grant CCF-2330048, BSF Grant 2024233, and a Simons Investigator
Award. J. Kuszmaul was supported in part by an MIT Akamai Presidential
Fellowship. The authors would like to thank Michael A. Bender for his mentorship and advice during this project.
